\newtheorem{lem}{Lemma}
\newtheorem{thm}{Theorem}
\newtheorem{cor}{Corollary}
\newcommand{\ba}{\begin{eqnarray}}
\newcommand{\ea}{\end{eqnarray}}
\newcommand{\ban}{\begin{eqnarray*}}
\newcommand{\ean}{\end{eqnarray*}}
\newcommand{\braket}[2]{\mbox{$ \langle #1 | #2 \rangle $}}
\newcommand{\ket}[1]{\mbox{$ | #1 \rangle $}}
\DeclareMathOperator{\supp}{Supp}
\DeclareMathOperator{\nz}{Nonzero}
\begin{document}

\title{Psi-Epistemic Theories: The Role of Symmetry}
\author{ Scott Aaronson\thanks{%
MIT. \ email: aaronson@csail.mit.edu. \ This material is based upon work
supported by the National Science Foundation under Grant No.\ 0844626. \ Also
supported by an NSF STC grant, a TIBCO Chair, a Sloan
Fellowship, and an Alan T.\ Waterman Award.},  Adam Bouland\thanks{%
MIT. \ email: adam@csail.mit.edu. \ This material is based upon work supported
by the National Science Foundation Graduate Research Fellowship under Grant
No.\ 1122374. \ Also supported by the Center for Science of Information
(CSoI), an NSF Science and Technology Center, under grant agreement
CCF-0939370.},  Lynn Chua\thanks{%
MIT. \ email: chualynn@mit.edu.},  George Lowther\thanks{%
email: george.lowther@blueyonder.co.uk.}}
\date{}
\maketitle

\begin{abstract}
Formalizing an old desire of Einstein, ``$\psi$-epistemic theories'' try to
reproduce the predictions of quantum mechanics, while viewing quantum states
as ordinary probability distributions over underlying objects called ``ontic
states.'' \ Regardless of one's philosophical views about such theories, the
question arises of whether one can cleanly rule them out, by proving no-go
theorems analogous to the Bell Inequality. \ In the 1960s, Kochen and
Specker (who first studied these theories) constructed an elegant $\psi$%
-epistemic theory for Hilbert space dimension $d=2$, but also showed that
any deterministic $\psi$-epistemic theory must be ``measurement contextual''
in dimensions $3$ and higher. \ Last year, the topic attracted renewed
attention, when Pusey, Barrett, and Rudolph (PBR) showed that any $\psi$%
-epistemic theory must ``behave badly under tensor product.'' \ In this
paper, we prove that even without the Kochen-Specker or PBR assumptions,
there are no $\psi$-epistemic theories in dimensions $d\geq 3$ that satisfy
two reasonable conditions: (1) symmetry under unitary transformations, and
(2) ``maximum nontriviality'' (meaning that the probability distributions
corresponding to any two non-orthogonal states overlap). \ This no-go theorem
holds if the ontic space is either the set of quantum states or the set of unitaries. \ The proof of this result, in
the general case, uses some measure theory and differential geometry. \ On the other
hand, we also show the surprising result that \emph{without} the symmetry
restriction, one can construct maximally-nontrivial $\psi$-epistemic
theories in every finite dimension $d$.
\end{abstract}

\section{Introduction}

Debate has raged for almost a century about the interpretation of the
quantum state. \ Although a quantum state evolves in a unitary and
deterministic manner according to the Schr\"{o}dinger equation, measurement
is a probabilistic process in which the state is postulated to collapse to a
single eigenstate. \ This is often viewed as an unnatural and
poorly-understood process.

$\psi$-epistemic theories have been proposed as alternatives to standard quantum
mechanics. \ In these theories, a quantum state merely represents
probabilistic information about a ``real, underlying'' physical state
(called the \emph{ontic state}). \ Perhaps not surprisingly, several no-go theorems
have been proven that strongly constrain the ability of $\psi$-epistemic
theories to reproduce the predictions of standard quantum mechanics. \ Most
famously, the Bell inequality \cite{bell}---while not usually seen as a result about $\psi$-epistemic
theories---showed that no such theory can account for the results of all possible measurements on an entangled state in a
``factorizable'' way (i.e., so that the ontic state has a separate component for each qubit, and measurements of a given qubit only reveal information about that qubit's component of the ontic state). \ Also, the Kochen-Specker theorem \cite{ks} showed that in Hilbert space
dimensions $d\geq 3$, no $\psi$-epistemic theory can be both deterministic and
``noncontextual'' (meaning that whether an eigenstate $\psi$ gets returned
as a measurement outcome is independent of which \emph{other} states are
also in the measurement basis). \ More recently, the Pusey-Barrett-Rudolph
(PBR) theorem \cite{pbr} showed that nontrivial $\psi$-epistemic theories are inconsistent, if
the ontic distribution for a product state $| \psi \rangle \otimes %
 | \phi \rangle $ is simply the tensor product of the ontic
distribution for $ | \psi \rangle $ with the ontic distribution for
$ | \phi \rangle $. \ Even more recently, papers by Maroney \cite{mar} and Leifer and Maroney \cite{marleif} prove the impossibility of a ``maximally $\psi$-epistemic theory," in which the overlap of the ontic distributions for all non-orthogonal states fully accounts for the uncertainty in distinguishing them via measurements. \

In this paper, we study what happens
if one drops the Bell, Kochen-Specker, and PBR assumptions, and merely asks for a $\psi$-epistemic theory in which the ontic distributions overlap for all non-orthogonal states. \

A $\psi$-epistemic theory is a particular type of ontological theory of quantum mechanics. Formally, an ontological theory in $d$ dimensions specifies:

\begin{enumerate}
\item A measurable space $\Lambda$, called the \emph{ontic space} (the
elements $\lambda \in \Lambda$ are then the ontic states).

\item A function mapping each quantum state $ | \psi \rangle \in H_d
$ to a probability measure $\mu_\psi$ over $\Lambda$, where $H_d$ is the
Hilbert space in $d$ dimensions.

\item For each orthonormal measurement basis $M=\{\phi_1,\ldots,\phi_d\}$, a
set of $d$ \emph{response functions} $\{\xi_{k,M}(\lambda) \in [0,1]\}$,
which give the probability that an ontic state $\lambda$ would produce a
measurement outcome $\phi_k$.
\end{enumerate}

The response functions
must satisfy the following two conditions:
\ba \int_\Lambda \xi_{k,M}(\lambda)\,\mu_\psi(\lambda)\,d\lambda &=& |\braket{\phi_k}{\psi}|^2\,, \label{born} \\
\sum_{i=1}^{d} \xi_{i,M}(\lambda) &=& 1\,\,\, \forall \lambda,\, M. \label{cover} \ea
Here Equation (\ref{born}) says that the ontological theory perfectly reproduces the predictions of quantum
mechanics (i.e., the Born rule). \ Meanwhile, Equation (\ref{cover}) says that the probabilities of the
possible measurement outcomes must always sum to $1$, even when ontic states
are considered individually (rather than as elements of probability
distributions). \ Note that Equations (\ref{born}) and (\ref{cover}) are
logically independent of each other.\footnote{Also, we call an ontological theory \textit{deterministic} if the response functions take values
only in $\{0,1\}$. \ The Kochen-Specker theorem then states that, in
dimensions $d \geq 3$, any deterministic theory must have response functions
that depend nontrivially on $M$.}

The conditions above can easily be satisfied by setting $\Lambda = \mathbb{CP}^{d-1}$, the complex projective space consisting of unit vectors in $H_d$ up to an arbitrary phase, and $%
\mu_\psi(\lambda) = \delta(\lambda-\psi)$, where $\delta$ is the Dirac delta
function, and $\xi_{k,M}(\lambda) = | \langle \phi_k | \lambda
\rangle |^2$. \ But that simply gives an uninteresting restatement of
quantum mechanics, since the $\mu_{\psi}$'s for different $\psi$'s have disjoint supports. \ An ontological
theory in which the $\mu_{\psi}$'s have disjoint supports is known as a \textit{$\psi$-ontic} theory \cite{harriganspekkens}. \ Let $\supp(\mu_{\psi})\subseteq\Lambda$ be the
support of $\mu_{\psi}$. \ Then we call an ontological theory \textit{$\psi$-epistemic} if there exist $\psi\neq\phi$ such that $\mu_\psi$ and $\mu_\phi$
have total variation distance less than 1 \cite{harriganspekkens}, i.e.
\begin{eqnarray}
\frac{1}{2}\int_{\Lambda}\left|\mu_{\psi}(\lambda)-\mu_{\phi}(\lambda)%
\right| d\lambda < 1.
\end{eqnarray}
If $\mu_\psi$ and $\mu_\phi$ have total variation distance less than 1, then
we say that $ | \psi \rangle $ and $ | \phi \rangle $ have
``nontrivial overlap''. \ Otherwise we say they have ``trivial overlap''. \
Note that it's possible for $ | \psi \rangle $ and $ | \phi
\rangle $ to have trivial overlap even if $\mu_\psi$ and $\mu_\phi$ have
intersecting supports (this can happen if $\supp(\mu_{\psi})\cap \supp%
(\mu_{\phi})$ has measure $0$).

Note also that if $ | \psi \rangle $ and $ | \phi \rangle $
are orthogonal, then if we set $ | \phi_1 \rangle  =  | \psi
\rangle $ and $ | \phi_2 \rangle  = | \phi \rangle $,
the conditions $| \langle \phi_1 | \psi \rangle |=| \langle
\phi_2 | \phi \rangle |=1$ and $| \langle \phi_2 | \psi \rangle %
|=| \langle \phi_1 | \phi \rangle |=0$ imply that $\mu_\psi$ and $%
\mu_\phi$ have trivial overlap. \ Hence, we call a theory \textit{%
maximally nontrivial} if the overlap is \emph{only} trivial for orthogonal states: that is, if all non-orthogonal states $%
 | \psi \rangle, | \phi \rangle $ have nontrivial
overlap.

In a maximally nontrivial theory, some of the uncertainty of quantum
measurement is explained by the overlap between the distributions
corresponding to non-orthogonal states. \ Recently Maroney \cite{mar} and
Leifer and Maroney \cite{marleif} showed that it is impossible to have a ``maximally $\psi$%
-epistemic theory" in which \emph{all} of the uncertainty is explained by
the overlap of distributions. \ Specifically, they require that, for all quantum states $\ket{\psi}, \ket{\phi}$,
\ba \int_{\supp(\mu_{\phi})} \mu_{\psi}(\lambda) d\lambda = |\braket{\phi}{\psi}|^2.\ea
Here we are asking for a much weaker
condition, in which only \emph{some} of the uncertainty in measurement
statistics is explained by the overlap of distributions, and we do not impose any conditions on the amount of overlap.

Another property that we might like a $\psi$-epistemic theory to satisfy is
\textit{symmetry}. \ Namely, we call a $\psi$-epistemic theory \textit{%
symmetric} if $\Lambda=\mathbb{CP}^{d-1}$ and the probability distribution $\mu_\psi(\lambda)$ is
symmetric under unitary transformations that fix $ | \psi \rangle $%
---or equivalently, if $\mu_{\psi}$ is a function $f_{\psi}$ only of $| \langle
\psi | \lambda \rangle |$. \ We stress that this function is allowed to be different for different $\psi$'s: symmetry
only applies to each $\mu_{\psi}$ individually. \ This makes our no-go theorem for symmetric theories stronger. \ If
additionally $\mu_{\psi}$ is a \emph{fixed} function $f$ only of $| \langle
\psi | \lambda \rangle |$, then we call the theory \emph{strongly symmetric}. \ Note that, if a theory is strongly symmetric, then in order to apply a unitary $U$ to a state $|\psi \rangle$, one can simply apply $U$ to the ontic states. \ So strongly symmetric theories have a clear motivation: namely, they allow us to keep the Schr\"{o}dinger equation as the time evolution of our system.

A similar notion to symmetry was recently explored by Hardy \cite{hardy} and Patra et al.\ \cite{patra}. \ Given a $\psi$-epistemic theory, it is natural to consider the action of unitaries on the ontic states $\lambda \in \Lambda$. \ Hardy and Patra et al.\ define such a theory to obey ``ontic indifference" if for any unitary $U$ such that $U\ket{\psi}=\ket{\psi}$, and any $\lambda \in \supp(\mu_\psi)$, we have $U\lambda=\lambda$. \ They then show that no $\psi$-epistemic theories satisfying ontic indifference exist in dimensions $d\geq 2$. \ Note that symmetric theories and even strongly symmetric theories need not obey ontic indifference, since unitaries can act nontrivially on ontic states in $\supp(\mu_\psi)$. \ So the result of Hardy and Patra et al.\ is incomparable with ours.

In dimension $2$, there exists a
strongly symmetric and maximally nontrivial theory found by Kochen and Specker \cite%
{ks}. \ In dimensions $d\geq 3$, Lewis et al.\ \cite{lewis} found a
nontrivial $\psi$-epistemic theory for all finite $d$. \ However, their
theory is not symmetric and is far from being maximally nontrivial.

In this paper, we first give a construction of a maximally nontrivial $\psi$%
-epistemic theory for arbitrary $d$. \ Our theory builds on that of
Lewis et al.\ \cite{lewis}, and was first constructed in a post on
MathOverflow \cite{mathoverflow}. \ Unfortunately, this theory is rather unnatural and is
not symmetric. \ We then prove that it is impossible to construct a
maximally nontrivial theory that \emph{is} symmetric, for Hilbert space
dimensions $d\geq 3$. \ Furthermore, we extend this work to rule out a generalization of strongly symmetric theories with $\Lambda=U(d)$ rather than $\Lambda = \mathbb{CP}^{d-1}$ in $d\geq 3$. \ In short, if we want maximally nontrivial
theories in $3$ or more dimensions, then we either need an
ontic space $\Lambda$ other than $\Lambda = \mathbb{CP}^{d-1}$ or $\Lambda=U(d)$, or else we need ontic distributions $\mu_{\psi}$ that
``single out preferred directions in Hilbert space.''

\section{Nonsymmetric, Maximally Nontrivial Theory}

By considering $\Lambda = \mathbb{CP}^{d-1} \times [0,1]$, Lewis et al.\ \cite{lewis}
found a deterministic $\psi$-epistemic theory for all finite $d$%
. \ They raised as an open problem whether a \emph{maximally} nontrivial
theory exists. \ In this section, we answer their question in the
affirmative. \ Specifically, we first show
that, for any two non-orthogonal states, we can construct a theory such that
their probability distributions overlap. \ We then take a convex combination
of such theories to obtain a maximally nontrivial theory.

\begin{lem}
\label{mix} Given any two non-orthogonal quantum states $ | a
\rangle, | b \rangle $, there exists a $\psi$-epistemic
theory $T(a,b)=(\Lambda,\mu,\xi)$ such that $\mu_a$ and $%
\mu_b$ have nontrivial overlap. \ Moreover, for $T(a,b)$, there
exists $\varepsilon>0$ such that $\mu_{a^{\prime }}$ and $\mu_{b^{\prime
}}$ have nontrivial overlap for all $ | a' \rangle $, $
| b' \rangle $ that satisfy

\[ ||a-a^{\prime }||,||b-b^{\prime
}||<\varepsilon. \]
\end{lem}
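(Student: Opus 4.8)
My plan is to build $T(a,b)$ as a small perturbation of the trivial ($\psi$-ontic) theory. Put $\Lambda=\mathbb{CP}^{d-1}\sqcup\{\star\}$, where $\star$ is one extra isolated point, and write $\delta_\psi$ for the point mass at $\psi\in\mathbb{CP}^{d-1}$ (not the Dirac delta). Let $c=|\braket{a}{b}|>0$, fix a radius $r>0$ small compared with the angle between $a$ and $b$, and let $K=\bar B(a,r)\cup\bar B(b,r)\subseteq\mathbb{CP}^{d-1}$ (balls in a metric equivalent to the one induced by $\|\cdot\|$). Fix a weight $\eta>0$, small in a sense made precise below, and set
\[
\mu_\psi=\delta_\psi\ \ (\psi\notin K),\qquad \mu_\psi=(1-\eta)\,\delta_\psi+\eta\,\delta_\star\ \ (\psi\in K).
\]
Every $\mu_\psi$ with $\psi\in K$ dominates the measure $\eta\,\delta_\star$, so any two of them have total variation distance $\le 1-\eta<1$. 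Choosing $\varepsilon$ small enough that $\|a-a'\|,\|b-b'\|<\varepsilon$ forces $a',b'\in K$, the ``moreover'' clause is immediate, and (the case $a'=a,b'=b$) so is the nontrivial overlap of $\mu_a$ and $\mu_b$. Everything beyond this is about choosing legal response functions.

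Off the core I keep the trivial rule $\xi_{k,M}(\lambda)=|\braket{\phi_k}{\lambda}|^2$, which already gives $(\ref{born})$ and $(\ref{cover})$ for the unperturbed $\mu_\psi$. On $\star$ and on $K$ the conditions $(\ref{born})$–$(\ref{cover})$ form a finite linear system, which I solve as follows. If a basis $M$ contains a point $\psi_0\in K$ (for $r$ small it contains at most one, since two orthonormal vectors cannot both lie within $r$ of $a$, nor one within $r$ of $a$ and another within $r$ of $b$ given $a\not\perp b$), then $(\ref{born})$ for the outcome $\psi_0$ reads $(1-\eta)\,\xi_{\psi_0,M}(\psi_0)+\eta\,\xi_{\psi_0,M}(\star)=1$, which with $(\ref{cover})$ forces both $\xi_{\cdot,M}(\psi_0)$ and $\xi_{\cdot,M}(\star)$ to be the point mass on the outcome $\psi_0$. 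So I simply \emph{declare} $\xi_{\cdot,M}(\star)$ to be the point mass on a distinguished outcome $k^*(M)$, taken to be (i) the unique outcome of $M$ lying within distance $2r$ of $a$ or of $b$, if such an outcome exists, and otherwise (ii) any outcome of $M$ non-orthogonal to both $a$ and $b$. The remaining values on $K$ are then pinned by $(\ref{born})$:
\[
\xi_{k,M}(\lambda)=\frac{|\braket{\phi_k}{\lambda}|^2-\eta\,\mathbf 1[k=k^*(M)]}{1-\eta}\qquad(\lambda\in K),
\]
and $\sum_k\xi_{k,M}(\lambda)=1$ follows at once. (For $\lambda=\psi_0$ this formula reproduces the forced point mass; off $K$ it is the trivial rule; so $\xi_{k,M}$ is a well-defined Borel function on $\Lambda$.)

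Two inputs make this legal, and the second is the real work. (a) A short linear-algebra fact: if every outcome of an orthonormal basis $M$ were orthogonal to $a$ or to $b$, then grouping each outcome with $a$ or with $b$ according to which it is orthogonal to would split $\mathbb{C}^d$ into two mutually orthogonal subspaces containing $a$ and $b$ respectively, forcing $a\perp b$; hence (since $a\not\perp b$) the outcome in (ii) always exists, and by continuity on the compact group $U(d)$ one gets a \emph{uniform} constant $g_0=\min_M\max_k\min(|\braket{\phi_k}{a}|^2,|\braket{\phi_k}{b}|^2)>0$ witnessing this. (b) One must check $\xi_{k,M}(\lambda)\in[0,1]$ for all $\lambda\in K$, $k$, $M$. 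The lower bound amounts to $|\braket{\phi_{k^*(M)}}{\lambda}|^2\ge\eta$, which holds when $\eta$ is small relative to $g_0$, $c^2$ and $1$ (using that $\lambda$ lies within $r$ of $a$ or of $b$, while $\phi_{k^*(M)}$ either has squared overlap $\ge g_0$ with both or lies within $2r$ of $a$ or of $b$). The upper bound, for $k\ne k^*(M)$, amounts to $|\braket{\phi_k}{\lambda}|^2\le 1-\eta$; this could fail only if some outcome $\phi_k$ were within angle $\approx\sqrt\eta$ of $\lambda$, but then $\phi_k$ would lie within $2r$ of $a$ or of $b$, and the same orthonormality/geometry as above forces $\phi_k=k^*(M)$, a contradiction. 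So the bad case never occurs once $r$ is small relative to the angle between $a$ and $b$ and $\eta$ is small relative to $r^2$.

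The main obstacle is precisely this last verification: tuning the single weight $\eta$ and the rule for $k^*(M)$ so that the response functions stay in $[0,1]$ \emph{simultaneously for all measurements}, especially those whose outcomes crowd up against the core $K$. Fact (a) — the linear-algebra lemma together with compactness of $U(d)$ — is exactly what guarantees that a workable $(\eta,r)$ exists; the remaining checks (measurability of $\xi$, the Born rule for unperturbed states, $(\ref{cover})$ everywhere, and hence that $T(a,b)$ is a legitimate $\psi$-epistemic theory) are routine.
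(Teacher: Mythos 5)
Your construction is correct, but it takes a genuinely different route from the paper's. The paper follows Lewis et al.: the ontic space is $\mathbb{CP}^{d-1}\times[0,1]$, the response functions are deterministic ``staircase'' rules (after sorting the basis vectors by $\min(|\braket{\phi_i}{a}|,|\braket{\phi_i}{b}|)$), and the overlap is created by redistributing the weight that $\mu_a$ and $\mu_b$ place on $\{\ket{a},\ket{b}\}\times[0,\varepsilon]$, a region on which every measurement returns the same first outcome; the ``moreover'' clause then follows by continuity. You instead adjoin a single extra ontic point $\star$, give every state in a neighborhood $K$ of $a$ and $b$ an atom of mass $\eta$ at $\star$, keep the trivial (non-deterministic) response $|\braket{\phi_k}{\lambda}|^2$ away from $K$, and repair the Born rule on $K$ by subtracting $\eta$ from the response of a distinguished outcome $k^*(M)$. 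Both arguments pivot on the same key fact---every orthonormal basis contains a vector whose overlap with both $a$ and $b$ is bounded away from zero (the paper extracts the explicit bound $|\braket{a}{b}|/d$ from the triangle inequality, where your compactness argument also suffices)---and both obtain the robustness clause by making the special region an open neighborhood of $\{a,b\}$. Your version is more economical (one extra point rather than a $[0,1]$ factor) and makes the Born-rule bookkeeping transparent, since every $\mu_\psi$ is purely atomic and the integral in Equation (\ref{born}) only ever evaluates $\xi_{k,M}$ at $\psi$ and $\star$; what you give up is determinism of the response functions, which the paper's construction retains (though the lemma does not require it). Your verification that $\xi_{k,M}\in[0,1]$ on $K$---the lower bound at $k^*(M)$ from $\eta\lesssim\min(g_0,c^2)$, and the upper bound at $k\neq k^*(M)$ from $\eta\lesssim r^2$ together with uniqueness of a basis vector near $K$---is exactly the right set of checks and goes through.
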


\begin{proof}
Our ontic state space will be $\Lambda=\mathbb{CP}^{d-1}\times [0,1]$. \ Given an orthonormal basis $M=\{\phi_1,\ldots,\phi_d\}$, we first sort the $\phi_i$'s in decreasing order of $\min(|\braket{\phi_i}{a}|,|\braket{\phi_i}{b}|)$. \ Then the outcome of measurement $M$ on ontic state $(\lambda,p)$ will be the smallest positive integer $i$ such that
\ba |\braket{\phi_1}{\lambda}|^2+\cdots+|\braket{\phi_{i-1}}{\lambda}|^2\leq p \leq |\braket{\phi_1}{\lambda}|^2+\cdots+|\braket{\phi_i}{\lambda}|^2. \ea

\noindent In other words, $\xi_{i,M}(\ket{\lambda},p)=1$ if $i$ satisfies the above and no $j<i$ does, and is $0$ otherwise. \ If we assume that $\mu_{\psi}(\ket{\lambda},p)=\delta(\ket{\lambda}-\ket{\psi})$ for all $p\in[0,1]$, then it can be verified that $T(a,b)$ is a valid ontological theory, albeit so far a $\psi$-ontic one.

We now claim that there exists an $\varepsilon>0$ such that, for all orthonormal bases $M=\{\phi_1,\ldots,\phi_d\}$, there exists an $i$ such that $|\braket{\phi_i}{a}|\geq\varepsilon$ and $|\braket{\phi_i}{b}|\geq\varepsilon$. \ Indeed, by the triangle inequality, we can let $\varepsilon=|\braket{a}{b}|/d$, and $\varepsilon >0$ since $|\braket{a}{b}|>0$. \ This means that, for all measurements $M$ and all $p\in[0,\varepsilon]$, the outcome is always $i=1$ when $M$ is applied to either of the ontic states $(\ket{a},p)$ or $(\ket{b},p)$.

Following Lewis et al.\ \cite{lewis}, we can ``mix" the probability distributions $\mu_a$ and $\mu_b$, or have them intersect in the region $p\in [0,\varepsilon]$, without affecting the Born rule statistics for any measurement. \ Explicitly, we can let
\ba E_{a,b} = \{\ket{a},\ket{b}\} \times [0,\varepsilon], \ea
so that all $\lambda\in E_{a,b}$ give the same measurement outcome $\phi_1$ for all measurements $M$. \ Then any probability assigned by $\mu_{a}$ or $\mu_{b}$ to states within $E_{a,b}$ can be redistributed over $E_{a,b}$ without changing the measurement statistics. \ Thus, we can define $\mu_{a}$ such that the weight it originally placed on $\ket{a} \times [0,\varepsilon]$ is now placed uniformly on $E_{a,b}$. \ More formally, we set
\ba \mu_{a}(\ket{\lambda},x) =
  \begin{cases}
   \delta(\ket{\lambda}-\ket{a})  & \text{if } x > \varepsilon \\
   \varepsilon\mu_{E_{a,b}}(\ket{\lambda},x)       & \text{if } x\leq \varepsilon,
  \end{cases} \ea
where $\mu_{E_{a,b}}$ is the uniform distribution over $E_{a,b}$. \ We similarly define $\mu_{b}$. \ This then yields a $\psi$-epistemic theory with nontrivial overlap between $\ket{a}$ and $\ket{b}$.

Furthermore, suppose we have $\ket{a'}$, $\ket{b'}$, such that $||a-a'||, ||b-b'||<\frac{\varepsilon}{2}$. \ Then by continuity, we can similarly mix the distributions $\mu_{a'}$ and $\mu_{b'}$, or have them intersect each other in the region $p\in[0,\frac{\varepsilon}{2}]$, without affecting any measurement outcome. \ Note that the procedure of sorting the basis vectors of $M$ might cause the measurement outcome to change discontinuously. \ However, this is not a problem since the procedure depends only on $\ket{a}$ and $\ket{b}$, which are fixed, and hence occurs uniformly for all $\ket{a'}$ and $\ket{b'}$ defined as above.
\end{proof}

Lemma \ref{mix} implies that for any two non-orthogonal states $ |
a \rangle $ and $ | b \rangle $, we can construct a theory
where $\mu_{a^{\prime }}$ and $\mu_{b^{\prime }}$ have nontrivial
overlap for all $||a-a^{\prime }||, ||b-b^{\prime }||<\varepsilon$,
for some $\varepsilon>0$. \ To obtain a maximally nontrivial theory, such that
any two non-orthogonal vectors have probability distributions that overlap,
we take a convex combination of such $\psi$-epistemic theories.

Given two $\psi$-epistemic theories $T_1=(\Lambda_1,\mu_1,\xi_1)$ and $%
T_2=(\Lambda_2,\mu_2,\xi_2)$ and a constant $c\in(0,1)$, we define the new
theory $cT_1+(1-c)T_2=(\Lambda_c,\mu_c,\xi_c)$ by setting $%
\Lambda_c=\left(\Lambda_1 \times \{1\} \right) \cup \left(\Lambda_2 \times
\{2\} \right)$ and $\mu_c = c\mu_1 + (1-c)\mu_2$. \ For any $%
(\lambda,i)\in\Lambda_c$, we then define $\xi_c$ to equal $\xi_i$ on $%
\Lambda_i$.

The following is immediate from the definitions.

\begin{lem}
\label{convex} $cT_1+(1-c)T_2$ is a $\psi$-epistemic theory. \ Furthermore,
if $T_1$ mixes the probability distributions $\mu_{\psi}$, $\mu_{\phi}$ of
two states $ | a \rangle $ and $ | b \rangle $, and $%
T_2$ mixes $\mu_{a^{\prime }}$ and $\mu_{b^{\prime }}$, then $%
cT_1+(1-c)T_2$ mixes both pairs of distributions, assuming $c\not\in\{0,1\}$.
\end{lem}

Note that the ontic state space of a convex combination of theories contains a copy of each of the original ontic spaces $\Lambda_1$ and $\Lambda_2$. If $\Lambda_1 =\Lambda_2$, it is natural to ask if we could get away with keeping only one copy of the ontic state space. \ Unfortunately the answer in general is no. \ Suppose that we let $\Lambda_c= \Lambda_1 =\Lambda_2$, let $\mu_c = c\mu_1 + (1-c)\mu_2$, and let $\xi_c = c\xi_1 + (1-c)\xi_2$. \ Then the probability of measuring outcome $i$ under measurement $M$ and ontic distribution $\mu_{c_\psi}$ is
\[\int_\Lambda \left(c\xi_{1_{i,M}}(\lambda)+(1-c)\xi_{2_{i,M}}(\lambda)\right)\,\left(c\mu_{1_\psi}(\lambda)+(1-c)\mu_{2_\psi}(\lambda)\right)\,d\lambda   \]
which will not in general reproduce the Born rule due to unwanted cross terms. \ This is why it is necessary to keep two copies of the ontic state space when taking a convex combination of theories.

Using Lemmas \ref{mix} and \ref{convex}, we now construct a maximally
nontrivial $\psi$-epistemic theory. \ Let $T(a,b)$ be the theory
returned by Lemma \ref{mix} given $ | a \rangle , | b
\rangle \in H_d$. \ Also, for all positive integers $n$, let $A_n$ be a $%
1/n$-net for $H_d$, that is, a finite subset $A_n\subseteq H_d$ such that
for all $ | a \rangle \in H_d$, there exists $ | a'
\rangle \in A_n$ satisfying $||a-a^{\prime }||<1/n$. \ By making
small perturbations, we can ensure that $ \langle a | b \rangle
\neq 0$ for all $ | a \rangle , | b \rangle \in A_n$. \ Then our theory $T$ is defined as follows:
\begin{eqnarray}
T = \frac{6}{\pi^2}\sum_{n=1}^{\infty}\frac{1}{n^2}\left(\frac{1}{|A_n|^2}%
\sum_{a,b\in A_n} T(a,b)\right).
\end{eqnarray}

\noindent (Here, in place of $6/(\pi^{2} n^{2})$, we could have chosen any infinite sequence summing to unity.) \ This yields a maximally nontrivial theory, since it can be verified that $%
\mu_{a}$ and $\mu_{b}$ have nontrivial overlap for all non-orthogonal
states $ | a \rangle $ and $ | b \rangle $. \ Note
that the ontic space is now $\mathbb{CP}^{d-1}\times [0,1]\times \mathbb{N}$, which has the same
cardinality as $\mathbb{CP}^{d-1}$. \ It is thus possible to map this theory into a theory
that uses $\Lambda=\mathbb{CP}^{d-1}$ as its ontic space, using a bijection between the
ontic spaces. \ However, it is clear that under such a bijection the theory
becomes less symmetric: the quantum state $ | a \rangle $ no
longer has any association with the state $ | a \rangle $ in the
ontic space, and the measure is also very unnatural.

\section{Nonexistence of Symmetric, Maximally Nontrivial Theories}

\label{nonexistence} We now turn to showing that it is impossible to
construct a \textit{symmetric} maximally nontrivial theory, in dimensions $%
d\geq 3$. \ Recall that a theory is called symmetric if

\begin{enumerate}
\item $\Lambda=\mathbb{CP}^{d-1}$, and

\item for any quantum state $ | \psi \rangle $, the associated
probability distribution $\mu_{\psi}$ is invariant under unitary
transformations that preserve $ | \psi \rangle $.
\end{enumerate}

Specifically, if $U$ is a unitary transformation such that $U | \psi
\rangle = | \psi \rangle $, then we require that $%
\mu_{U\psi}(U\lambda)=\mu_{\psi}(\lambda)$. \ This implies that $%
\mu_{\psi}(\lambda)$ is a function only of $| \langle \psi | \lambda
\rangle |^2$: that is,
\begin{eqnarray}
\mu_{\psi}(\lambda)=f_{\psi}\left(| \langle \psi | \lambda \rangle %
|^2\right)
\end{eqnarray}
for some nonnegative function $f_{\psi}$. \ In other words, the probability
measure $\mu_\psi$ associated with state $\psi$ must be a measure $\nu_\psi$
on the unit interval which has been ``stretched out" onto $H_d$ over curves
of constant $|\langle \psi | \lambda \rangle |^2$. \ If additionally we assume that for \emph{any} $U$, $\mu_{U\psi}(U\lambda)=\mu_{\psi}(\lambda)$, or equivalently that $\mu_{\psi}(\lambda)=f\left(| \langle \psi | \lambda \rangle %
|^2\right)
$
for some \emph{fixed} nonnegative function $f$, the theory is called strongly symmetric.

In this section, we first prove several facts about symmetric, maximally nontrivial theories in general. \ Using these facts, we then show that no \emph{strongly} symmetric, maximally nontrivial theory exists in dimension 3 or higher. \ Restricting to the strongly symmetric case will make the proof considerably easier. \ Later we will show how to generalize to the ``merely" symmetric case.

\begin{figure}[h] \centering
\includegraphics[height=6cm]{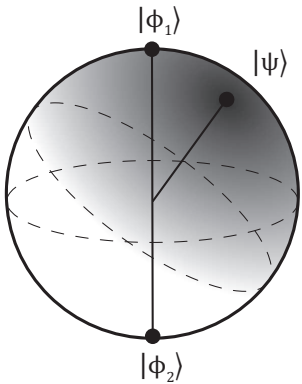} \vspace{-20pt}
\caption{Diagram of maximally nontrivial theory in $d=2$ on the Bloch sphere. The shaded region corresponds to $\supp(\mu_{\psi})$.} \label{figks}
\end{figure}

As mentioned earlier, Kochen and Specker proved that a strongly symmetric,
maximally nontrivial $\psi $-epistemic theory exists in dimension $d=2$ \cite%
{ks}. \ In their theory, which is illustrated in Figure \ref{figks}, the ontic space is $\Lambda =\mathbb{CP}^{1}$, and the
response functions for a given basis $M=\{\phi _{1},\phi _{2}\}$ are
\begin{eqnarray}
\xi _{1,M}(\lambda ) &=&1\text{ if }\left\vert  \langle \lambda |
\phi_1 \rangle \right\vert \geq \left\vert  \langle \lambda |
\phi_2 \rangle \right\vert \text{ or }0\text{ otherwise}, \\
\xi _{2,M}(\lambda ) &=&1\text{ if }\left\vert  \langle \lambda |
\phi_2 \rangle \right\vert \geq \left\vert  \langle \lambda |
\phi_1 \rangle \right\vert \text{ or }0\text{ otherwise}.
\end{eqnarray}%
Hence the response functions are deterministic and partition the ontic
space. \ Intuitively, the result of a measurement on any ontic state is
the state in the measurement basis to which it is closest. \ For any quantum
state $ | \psi \rangle \in H_{2}$, the probability distribution
over $\Lambda $ is given by
\begin{equation*}
\mu _{\psi }(\lambda )=%
\begin{cases}
\frac{2}{\pi }\left( | \langle \lambda | \psi \rangle |^{2}-\frac{1%
}{2}\right)  & \text{if }| \langle \lambda | \psi \rangle |^{2}>%
\frac{1}{2} \\
0 & \text{otherwise}.%
\end{cases}%
\end{equation*}%
It can readily be verified that this theory satisfies the conditions for a $\psi$-epistemic theory, and has the properties of being
strongly symmetric and maximally nontrivial. \ It is also maximally $\psi$-epistemic in the sense described by Maroney \cite{mar} and Maroney and Leifer \cite{marleif}. \

Given a measurement outcome $\psi$ and a basis $M$ containing $\psi$, we define the \emph{nonzero set} $\nz(\xi_{\psi,M})$ to be the set of ontic states $\lambda$ such that the response function $\xi_{\psi,M}(\lambda)$ gives a nonzero probability of returning $\psi$ when $M$ is applied:
\begin{eqnarray}
\nz(\xi_{\psi,M}) = \left\{\lambda: \xi_{\psi,M}(\lambda)\neq0\right\}.
\end{eqnarray}
Clearly in any $\psi$-epistemic theory, $\supp(\mu_\psi)
\subseteq \nz(\xi_{\psi,M})$ for any measurement basis $M$ that contains $\psi$, because the state $ | \psi \rangle $ must return measurement
outcome $\psi$ with probability 1 for any such $M$. \ Harrigan and Rudolph \cite{hr} call a $\psi$-epistemic theory \textit{deficient} if there exists a quantum state $ | \psi
\rangle $ and measurement basis $M$ containing $\psi$ such that
\begin{eqnarray}
\supp(\mu_{\psi}) \subsetneq \nz(\xi_{\psi,M}).  \label{deficiency}
\end{eqnarray}
In other words, a theory is deficient if there exists an ontic state $\lambda$ such that $\lambda$
has a nonzero probability of giving the measurement outcome corresponding to $%
 | \psi \rangle $ for some $M$, even though $\lambda\notin \supp(\mu_{\psi})
$. \ This can be thought of as a ``one-sided friendship" between $%
 | \psi \rangle $ and $\lambda$.

It was first pointed out by Rudolph \cite{rudolph}, and later shown by
Harrigan and Rudolph \cite{hr}, that theories in dimension $d\geq 3$ must be
deficient. \ In this section, we prove that as a result of deficiency, it is impossible
to have a symmetric, maximally nontrivial theory with $d\geq 3$. \ We derive a
contradiction by showing that if the theory is maximally nontrivial, then
there exist orthogonal states $ | \psi \rangle, |
\phi \rangle $, and a measurement basis $M$ containing $\ket{\psi}$, such that if $ | \phi
\rangle $ is measured, then the outcome $ | \psi
\rangle $ is returned with nonzero probability, contradicting the
laws of quantum mechanics.

We start with a few preliminary results on symmetric, maximally nontrivial theories. \
As stated previously, we know that $\mu_\psi$ is generated by stretching a
probability measure $\nu_\psi$ on the unit interval over $H_d$ along spheres
of constant $| \langle \psi | \lambda \rangle |^2$. \ By the Lebesgue
decomposition theorem, $\nu_\psi$ can be written uniquely as a sum of two
measures $\nu_{\psi, C}$ and $\nu_{\psi, S}$, where $\nu_{\psi ,C}$ is
absolutely continuous with respect to the Lebesgue measure over the unit
interval, and $\nu_{\psi, S}$ is singular with respect to that measure. \ Here when we say $\nu_{\psi, C}$ is ``absolutely continuous" with respect to the Lebesgue measure, we mean that it assigns zero measure to any set of Lebesgue measure zero. \ When we say $\nu_{\psi, S}$ is ``singular," we mean that its support is confined to a set of Lebesgue measure zero. \ Similarly, $\mu_{\psi}$ can be decomposed into its absolutely continuous and singular parts $\mu_{\psi,C}$ and $\mu_{\psi,S}$, which are defined respectively from the components $\nu_{\psi,C}$ and $\nu_{\psi,S}$ of $\nu_{\psi}$. \ By the Radon-Nikodym theorem, due to its absolute continuity $\nu_{\psi,C}$ has a probability
density function $g_{\psi}(x)$ that is a function, not a
pseudo-function or delta function. \ To simplify our analysis, first we will show that it is only necessary to look at the absolutely
continuous part of the distribution.

\begin{lem}
\label{measure_pos} For any distinct and non-orthogonal states $ | \psi \rangle $, $ | \phi \rangle $ in a
symmetric, maximally nontrivial theory, $\mu_{\psi, C}$ and $\mu_{\phi,C}$ have nontrivial overlap.
\end{lem}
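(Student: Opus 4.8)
The plan is to show that the singular parts $\mu_{\psi,S}$ and $\mu_{\phi,S}$ cannot be responsible for the required overlap, so that all the overlap guaranteed by maximal nontriviality must already be present in the absolutely continuous parts. The key observation is that $\mu_{\psi,S}$ is obtained by stretching a measure $\nu_{\psi,S}$ supported on a Lebesgue-null subset $N_\psi \subseteq [0,1]$ of the parameter $x = |\langle\psi|\lambda\rangle|^2$. Correspondingly, $\mathrm{Supp}(\mu_{\psi,S})$ is contained in the preimage $S_\psi := \{\lambda : |\langle\psi|\lambda\rangle|^2 \in N_\psi\}$, which is a measure-zero subset of $\mathbb{CP}^{d-1}$ (since the map $\lambda \mapsto |\langle\psi|\lambda\rangle|^2$ pushes the uniform measure on $\mathbb{CP}^{d-1}$ forward to a measure that is absolutely continuous w.r.t.\ Lebesgue on $[0,1]$, so preimages of null sets are null). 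Similarly $\mathrm{Supp}(\mu_{\phi,S}) \subseteq S_\phi$ has measure zero.

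The main step is then a ``genericity'' argument: I want to show that $S_\psi$ and $S_\phi$ are essentially disjoint, or at least that their intersection cannot support any overlap. Concretely, I would argue that for $\lambda$ in the intersection we'd need $|\langle\psi|\lambda\rangle|^2 \in N_\psi$ \emph{and} $|\langle\phi|\lambda\rangle|^2 \in N_\phi$ simultaneously, and since $N_\psi, N_\phi$ are both null, the set of such $\lambda$ is null (here one uses that the joint map $\lambda \mapsto (|\langle\psi|\lambda\rangle|^2, |\langle\phi|\lambda\rangle|^2)$ also has an absolutely continuous pushforward, which holds because $\psi \neq \phi$ and $d \geq 2$ so the two functions are functionally independent). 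Hence $\mu_{\psi,S}$ and $\mu_{\phi,S}$ are concentrated on disjoint null sets up to a null set. Now decompose the total variation: if $\mu_\psi$ and $\mu_\phi$ have nontrivial overlap (TV distance $<1$), that overlap mass must come from somewhere, and since the singular parts live on disjoint null sets while the continuous parts are mutually absolutely continuous with respect to Lebesgue, the only way to get nonzero overlap is for $\mu_{\psi,C}$ and $\mu_{\phi,C}$ to overlap. More carefully: $\frac{1}{2}\int |\mu_\psi - \mu_\phi| \leq \frac{1}{2}\int|\mu_{\psi,S} - \mu_{\phi,S}| + \frac{1}{2}\int|\mu_{\psi,C} - \mu_{\phi,C}|$, and the first term equals $\frac{1}{2}(\|\mu_{\psi,S}\| + \|\mu_{\phi,S}\|)$ since the singular parts are mutually singular; if the second term were $1$ (trivial overlap of the continuous parts, i.e.\ $\mu_{\psi,C}, \mu_{\phi,C}$ also mutually singular), then the total would be $\frac{1}{2}(\|\mu_{\psi,S}\| + \|\mu_{\phi,S}\|) + \frac{1}{2}(\|\mu_{\psi,C}\| + \|\mu_{\phi,C}\|) = 1$, contradicting nontrivial overlap. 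So $\mu_{\psi,C}$ and $\mu_{\phi,C}$ must have nontrivial overlap.

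The step I expect to be the main obstacle is establishing cleanly that $S_\psi \cap S_\phi$ has measure zero — i.e., that the pushforward of the uniform measure on $\mathbb{CP}^{d-1}$ under $\lambda \mapsto (|\langle\psi|\lambda\rangle|^2, |\langle\phi|\lambda\rangle|^2)$ is absolutely continuous with respect to two-dimensional Lebesgue measure on the relevant region. This requires a genuine (if small) computation with the geometry of $\mathbb{CP}^{d-1}$: one can reduce to the $2$- or $3$-dimensional subspace spanned by $|\psi\rangle$ and $|\phi\rangle$ together with the radial direction, parametrize, and check that the Jacobian of the map to $(x,y)$ is nonvanishing on a full-measure set. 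The hypothesis $d \geq 3$ is not actually needed for this lemma (it will be used later), but $\psi \neq \phi$ and non-orthogonality are what guarantee functional independence of the two coordinate functions. Once absolute continuity of the joint pushforward is in hand, the rest is the routine total-variation bookkeeping sketched above.
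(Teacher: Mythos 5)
Your reduction is the right one, and the easy cross terms are fine: $\mu_{\psi,C}$ and $\mu_{\phi,C}$ are absolutely continuous with respect to Lebesgue measure on $\mathbb{CP}^{d-1}$ while $\mu_{\psi,S}$ and $\mu_{\phi,S}$ live on Lebesgue-null sets, so the pairs $(\mu_{\psi,C},\mu_{\phi,S})$ and $(\mu_{\phi,C},\mu_{\psi,S})$ are automatically mutually singular, and your total-variation bookkeeping at the end is routine. The genuine gap is in the step where you dispose of the singular--singular pair. You propose to show that $S_\psi\cap S_\phi=\{\lambda: |\braket{\psi}{\lambda}|^2\in N_\psi,\ |\braket{\phi}{\lambda}|^2\in N_\phi\}$ has Lebesgue measure zero, and from this you conclude that $\mu_{\psi,S}$ and $\mu_{\phi,S}$ are ``concentrated on disjoint null sets up to a null set.'' That inference fails: these two measures are themselves singular with respect to Lebesgue measure, so knowing that $S_\psi\cap S_\phi$ is Lebesgue-null puts no constraint on how much of \emph{their} mass sits on it; nothing in your argument rules out both of them assigning positive (even full) measure to this intersection and overlapping there, in which case all of the overlap demanded by maximal nontriviality could be carried by the singular parts and the lemma would fail. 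Relatedly, absolute continuity of the joint pushforward of the \emph{uniform} measure under $\lambda\mapsto(|\braket{\psi}{\lambda}|^2,|\braket{\phi}{\lambda}|^2)$ only yields that the conditional law of the second coordinate given the first is absolutely continuous for Lebesgue-almost-every value $a$ of the first coordinate --- but $N_\psi$ is precisely a Lebesgue-null set of values of $a$, so this statement is silent about the fibers you actually need.

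The fix, which is what the paper does, is to use the symmetry hypothesis fiberwise: for \emph{every} $a<1$ (not merely almost every $a$), the conditional distribution of $\mu_\psi$ on the sphere $S_a=\{\lambda:|\braket{\psi}{\lambda}|^2=a\}$ is the uniform measure on that sphere, and for distinct states $\psi,\phi$ the pushforward of that uniform measure under $\lambda\mapsto|\braket{\phi}{\lambda}|^2$ is absolutely continuous on $[0,1]$ (the $a=1$ fiber is a single point of zero $\mu_\phi$-measure and is handled separately). Integrating over $a$ against $\nu_\psi$ --- including its singular part --- then gives $\mu_\psi\bigl(\supp(\mu_{\phi,S})\bigr)=0$ outright, which is exactly the statement your argument is missing. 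So your theorem-level plan matches the paper's, but the measure-theoretic tool you chose for the key step is too weak: you must disintegrate along the $\psi$-spheres and invoke symmetry on each individual fiber, rather than argue through the Lebesgue measure of $S_\psi\cap S_\phi$.
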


\begin{proof}
Let $S_a$ denote the set of states $\lambda\in\Lambda$ with $| \langle \lambda | \psi \rangle |^2 = a$. \ If $a=1$, then $S_a$ is a single point with zero $\mu_{\phi}$ measure. \ For $0 < a < 1$, $S_a$ is a $(2d-3)$-sphere centered about $\psi$, and for $a=0$ it is a $(2d-4)$-dimensional manifold diffeomorphic to $\mathbb{CP}^{d-1}$. \ In both of the latter cases, as $\phi,\psi$ are distinct non-orthogonal states, the distribution of $| \langle \lambda | \phi \rangle |^2$ for $\lambda$ chosen uniformly on $S_a$ is absolutely continuous with respect to the Lebesgue measure on $[0,1]$. \ Therefore, the distribution of $| \langle \lambda | \phi \rangle |^2$ for $\lambda\in\Lambda$ chosen according to $\mu_\psi$ is absolutely continuous over $| \langle \lambda | \psi \rangle | < 1$.

By our symmetry condition, $\mu_{\phi,S}$ is the product of a singular measure on [0,1], denoted $\nu_{\phi,S}$, and the uniform measure on rings of constant $|\braket{\phi}{\lambda}|^2$. \ Since drawing $\lambda$ from $\mu_\psi$ induces an absolutely continuous measure on $|\braket{\phi}{\lambda}|^2$, then in particular $\mu_\psi$ has probability zero of producing a state $\lambda$ with $|\braket{\phi}{\lambda}|^2 \in \supp(\nu_{\phi,S})$, because $\supp(\nu_{\phi,S})$ is a set of measure zero. \ This implies that $\mu_\psi$ has probability zero of producing a state $\lambda \in \supp(\mu_{\phi,S})$. \ Hence there is zero overlap between $\mu_\psi$ and $\mu_{\phi,S}$. \ In particular, $\mu_{\psi,C}$ and $\mu_{\psi,S}$ have zero overlap with $\mu_{\phi,S}$. \ Similarly, $\mu_{\phi,C}$ and $\mu_{\phi,S}$ have zero overlap with $\mu_{\psi,S}$.

This shows that the overlap between $\mu_{\phi,C}$ and $\mu_{\psi,C}$ equals that between $\mu_{\phi}$ and $\mu_{\psi}$, which is nonzero for maximally nontrivial theories.
\end{proof}

From now on, we will assume $\mu_\psi$ is generated only from the absolutely
continuous part $\nu_{\psi,C}$, so that $\mu_\psi$ has as probability
density function $f_{\psi}\left(| \langle \psi | \lambda \rangle %
|^2\right)$ where $f_\psi$ is a function, not a pseudo-function. \ We can do
this without loss of generality, as our proof will not depend on the
normalization of the probability distributions, and will only use facts
about the absolutely continuous components of the measures.

Next, let the distance between two states $\psi$ and $\phi$ be defined by their scaled radial distance (also called the \emph{Fubini-Study metric}):
\ban ||\psi-\phi|| = \frac{2}{\pi}\arccos\left(|\braket{\psi}{\phi}|\right). \ean
For any state $ | \psi \rangle \in H_d$, with probability
distribution $\mu_{\psi}(\lambda)=f_{\psi}(| \langle \psi | \lambda
\rangle |^2)$, we define the \textit{radius} of $\mu_\psi$ to be the
distance between $ | \psi \rangle $ and the furthest away state at
which $\mu_\psi$ has substantial density:
\begin{eqnarray}
r_{\psi} = \mbox{sup}\left\{r: \forall \delta>0 \int_{\lambda: r-\delta
<||\psi-\lambda|| <r} \mu_\psi (\lambda) \,d \lambda >0 \right\}.
\label{defr}
\end{eqnarray}

\begin{lem}
\label{rsupp} For a symmetric theory, given any two states $ | \psi
\rangle, | \phi \rangle $, we have $||\psi-\phi||\geq
r_{\psi}+r_{\phi}$ if and only if $ | \psi \rangle $ and $ |
\phi \rangle $ have trivial overlap.
\end{lem}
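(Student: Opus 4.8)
The plan is to translate everything into statements about the radial measures $\nu_\psi, \nu_\phi$ on the interval, since by symmetry $\mu_\psi$ is determined by $\nu_\psi$ stretched over spheres of constant $|\langle\psi|\lambda\rangle|^2$. Using the Fubini--Study metric, a state $\lambda$ at radial distance $t$ from $\psi$ has $|\langle\psi|\lambda\rangle| = \cos(\tfrac{\pi}{2}t)$, so the radius $r_\psi$ is exactly the supremum of the support of $\nu_\psi$ in the $t$-coordinate (more precisely, the sup of $t$ such that every left-neighborhood of $t$ has positive $\nu_\psi$-measure, matching the definition in (\ref{defr})). The overlap between $\mu_\psi$ and $\mu_\phi$ is nonzero iff $\supp(\mu_\psi)$ and $\supp(\mu_\phi)$ intersect in a set of positive measure, so the whole lemma reduces to a geometric statement about when the ``balls'' $\{\lambda : ||\psi - \lambda|| \le r_\psi\}$ and $\{\lambda : ||\phi - \lambda|| \le r_\phi\}$ overlap substantially, given that $\mu_\psi$ assigns positive measure to every neighborhood of every point at distance $< r_\psi$ from $\psi$ (and similarly for $\phi$).

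First I would prove the easy direction: if $||\psi - \phi|| \ge r_\psi + r_\phi$, then by the triangle inequality for the Fubini--Study metric, any $\lambda$ with $||\psi-\lambda|| < r_\psi$ satisfies $||\phi - \lambda|| > ||\psi-\phi|| - r_\psi \ge r_\phi$, hence $\lambda \notin \supp(\mu_\phi)$; so $\supp(\mu_\psi) \cap \supp(\mu_\phi)$ is contained in the sphere $||\psi-\lambda|| = r_\psi$, which has measure zero, giving trivial overlap. For the converse, suppose $||\psi-\phi|| < r_\psi + r_\phi$; I want to exhibit a positive-measure overlap. Pick radii $s_\psi < r_\psi$ and $s_\phi < r_\phi$ with $s_\psi + s_\phi \ge ||\psi-\phi||$ (possible since the sum $r_\psi + r_\phi$ strictly exceeds $||\psi-\phi||$, so we have room to decrease both slightly). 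Then the closed balls $B(\psi, s_\psi)$ and $B(\phi, s_\phi)$ intersect, and moreover — this is the key geometric point — their intersection contains an open set $V$ (a small open ball), because in $\mathbb{CP}^{d-1}$ with $d \ge 2$ two closed metric balls whose radii sum to at least the distance between centers, with each radius strictly positive, have interior points in common: one can take a point strictly inside both. On $V$, I claim $\mu_\psi$ and $\mu_\phi$ both assign positive measure: every point of $V$ is at distance $< r_\psi$ from $\psi$, and by the definition of $r_\psi$ in (\ref{defr}), $\mu_\psi$ assigns positive measure to the shell just inside radius $r_\psi$; intersecting with $V$ (an open set) and using symmetry (uniform on each sphere) shows $\mu_\psi(V) > 0$, and likewise $\mu_\phi(V) > 0$. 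Hence $\supp(\mu_\psi) \cap \supp(\mu_\phi)$ has positive measure and the overlap is nontrivial.

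The main obstacle I anticipate is the converse direction, specifically verifying that $\mu_\psi$ actually puts positive mass on the open set $V$ rather than just on some measure-zero piece of the sphere of radius $r_\psi$. This is where the precise form of the definition (\ref{defr}) matters: $r_\psi$ is defined so that \emph{every} shell $\{r - \delta < ||\psi - \lambda|| < r_\psi\}$ has positive $\mu_\psi$-measure, which forces positive mass arbitrarily close to (but strictly inside) radius $r_\psi$, and by rotational symmetry of $\mu_\psi$ about $\psi$ this mass is spread uniformly over each sphere, so it meets the open set $V$ in positive measure. I would also need to handle the boundary subtlety when $||\psi-\phi|| = r_\psi + r_\phi$ exactly — there the easy direction already shows trivial overlap, so the ``iff'' is consistent — and the degenerate case $r_\psi = 0$ or $r_\phi = 0$, where $\mu_\psi$ is a point mass at $\psi$ and the statement reduces to checking whether $\phi \in \supp(\mu_\phi)$-type conditions, handled directly. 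Finally I'd confirm the triangle inequality for the Fubini--Study metric $\tfrac{2}{\pi}\arccos|\langle\cdot|\cdot\rangle|$, which is standard.
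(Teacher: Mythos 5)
Your forward direction is fine and is essentially the paper's argument (triangle inequality, plus the observation that the set where $||\psi-\lambda||+||\phi-\lambda||=||\psi-\phi||$ is null). The converse direction, however, has a genuine gap at its last step. You establish that $\mu_\psi(V)>0$ and $\mu_\phi(V)>0$ for a common open set $V$, and then write ``Hence $\supp(\mu_\psi)\cap\supp(\mu_\phi)$ has positive measure and the overlap is nontrivial.'' That inference is false in general: two absolutely continuous measures can each assign positive mass to the same open set while the sets where their densities are positive are disjoint (think of one density supported on $A\times\mathbb{R}^{n-1}$ and the other on $(\mathbb{R}\setminus A)\times\mathbb{R}^{n-1}$ for a fat Cantor set $A$ --- exactly the pathology the paper worries about in Section 3.1). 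In the present setting, $\nu_\psi$ and $\nu_\phi$ could each be concentrated on nowhere-dense, positive-measure sets of radii, and nothing in your argument rules out that the corresponding unions of spheres about $\psi$ and about $\phi$ interleave inside $V$ without meeting in a set of positive measure. Closing this gap is the entire content of the paper's proof: it works at a point $\lambda_{\mathrm{int}}$ where the spheres of radii $r_\psi$ and $r_\phi$ cross \emph{transversally} (the normals $u,v$ are linearly independent because the geodesics from $\psi$ and $\phi$ are distinct), introduces local coordinates $x_1,x_2$ adapted to the two families of level surfaces, and then a Fubini-type computation shows the intersection of the two supports has Lebesgue measure roughly $g\varepsilon_1\varepsilon_2>0$. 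Some such transversality argument is unavoidable; your proof asserts its conclusion without it.

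There is also a secondary problem with where you place $V$. You put it inside $B(\psi,s_\psi)\cap B(\phi,s_\phi)$ with $s_\psi<r_\psi$ and $s_\phi<r_\phi$ strictly, but the definition (\ref{defr}) only guarantees $\nu_\psi$-mass in shells $(r_\psi-\delta,r_\psi)$; it says nothing about radii at most $s_\psi$, so $\mu_\psi(V)$ could be zero for your $V$ (take $\nu_\psi$ concentrated on $(s_\psi,r_\psi)$). Your own justification --- intersect $V$ with ``the shell just inside radius $r_\psi$'' --- requires $V$ to contain points at distance arbitrarily close to $r_\psi$ from $\psi$ \emph{and} arbitrarily close to $r_\phi$ from $\phi$, i.e.\ $V$ must be a neighborhood of a point on both boundary spheres, as in the paper's choice of $\lambda_{\mathrm{int}}$. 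Only there do the shell condition in (\ref{defr}) and rotational symmetry deliver positive mass for both measures, and only there does the transversality argument apply.
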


\begin{proof}
Suppose that $||\psi-\phi||\geq r_{\psi}+r_{\phi}$ but $\ket{\psi}$ and $\ket{\phi}$ have nontrivial overlap. \ Then $\supp(\mu_{\psi})\cap \supp(\mu_{\phi})$ has nonzero measure, and for any $\lambda$ in that set, the triangle inequality implies that $r_{\psi}+r_{\phi}\geq ||\psi-\lambda||+||\phi-\lambda|| \geq ||\psi-\phi||$. \
Thus $r_{\psi}$ and $r_{\phi}$ satisfy $r_{\psi}+r_{\phi} = ||\psi-\phi||$, which is a contradiction since $||\psi-\lambda|| + ||\phi-\lambda|| = ||\psi-\phi||$ only on a set of measure zero. \

Now suppose that $||\psi-\phi||< r_{\psi}+r_{\phi}$. \ Consider $\lambda_{\mathrm{int}}$, an ontic state which lies at the intersection of rings of radii $r_{\psi}$ and $r_{\phi}$ about $\psi$ and $\phi$, respectively. \ In other words $||\psi-\lambda_{\mathrm{int}}|| =r_{\psi} $ and $||\phi-\lambda_{\mathrm{int}}|| =r_{\phi}$. \ Such a $\lambda_{\mathrm{int}}$ exists because $||\psi-\phi||< r_{\psi}+r_{\phi}$. \ Then in the neighborhood of $\lambda_{\mathrm{int}}$, we claim that $\mu_{\psi}$ and $\mu_{\phi}$ have nontrivial overlap.

To show this, we will define a set $B$ of positive measure, on which $\mu_{\psi}$ and $\mu_{\phi}$ are ``equivalent" to the Lebesgue measure, in the sense that if $S\subseteq B$ has positive Lebesgue measure, then $S$ has positive measure under both $\mu_{\psi}$ and $\mu_{\phi}$. \ This implies that $\psi$ and $\phi$ have nontrivial overlap on $B$.

By the symmetry condition, each $\mu_{\psi}$ is a product measure between a measure $\nu_{\psi}$ on $[0,1]$ and a uniform measure on surfaces of constant $|\braket{\psi}{\lambda}|$. \ Let $u$ and $v$ be the normal vectors to surfaces of constant $|\braket{\psi}{\lambda}|$ and $|\braket{\phi}{\lambda}|$ at $\ket{\lambda_{\mathrm{int}}}$, respectively. \ Note that $u$ ($v$) is equal to the tangent vector to the geodesic running from $\psi$ ($\phi$) to $\lambda_{\mathrm{int}}$ evaluated at $\lambda_{\mathrm{int}}$. \ Since $||\psi-\phi||< r_{\psi}+r_{\phi}$, these are distinct geodesics, so $u$ and $v$ are linearly independent.

Since $\mathbb{CP}^{d-1}$ is a smooth Riemannian manifold, $u$ and $v$ form a local coordinate system in the $\varepsilon$ neighborhood of $\ket{\alpha}$, which we denote $N_{\varepsilon}(\ket{\alpha})$. \ If we associate coordinates $x_1$, $x_2$ with $u$ and $v$, the integral over $N_{\varepsilon}(\ket{\alpha})$ can be parameterized as
\[
\int  g(x_1, x_2, y_1 \ldots y_{2d-4}) \mathrm{d} x_1 \mathrm{d}x_2 \mathrm{d}y_1 \ldots \mathrm{d}y_{2d-4}
\]
Here $g$ is the square root of the metric, which is strictly positive in the neighborhood of $\ket{\lambda_{\mathrm{int}}}$. \ Also, $\mathrm{d}x_i$ is the Lebesgue integral over the coordinate $x_i$, and the $y_i$ are coordinates corresponding to the remaining $2d-4$ dimensions of the space.

Now consider the set
\[
B= N_{\varepsilon}(\ket{\alpha}) \cap  \supp(\mu_{\psi}) \cap \supp(\mu_{\phi})
\]
Trivially $\mu_{\psi}$ and $\mu_\phi$ are equivalent to the Lebesgue measure on $B$. \ Note that $\supp(\mu_{\psi})$ is a union of surfaces $S_1$ of constant $| \braket{\lambda }{\psi}|$ which are perpendicular to $u$ at $\alpha$. \ If $\varepsilon$ is sufficiently small these surfaces have negligible curvature, so they look like orthogonal hyperplanes in the $x_1$ coordinate system.  \ Let $\varepsilon_1$ be the Lebesgue measure on $\supp(\mu_{\psi}) \bigcap B$. \ Let $S_2$ and $\varepsilon_2$ be defined similarly for $\phi$. \ If the surfaces $S_1$, $S_2$ had zero curvature, the Lebesgue measure of $B$ would simply be the product of the measures $\varepsilon_1 \varepsilon_2$, since $x_1$ and $x_2$ are orthogonal coordinates. \ Since the surfaces have slight curvature, and the coordinates $x_i$ are not truly orthogonal, the above calculation has to be changed slightly. \ Specifically, for sufficiently small $\varepsilon$ the Lebesgue measure of $B$ can be approximated by $g\varepsilon_1 \varepsilon_2$, where $g$ is the square root of the metric at $\lambda_{\mathrm{int}}$. \ This quantity is strictly positive since each $\varepsilon_i>0$ by the definition of $r$, the metric $g$ is strictly positive, and $\mu_{\phi}$ and $\mu_\phi$ are absolutely continuous with respect to the Lebesgue measure. \ Hence $B$ has positive Lesbesgue measure.

\end{proof}

\begin{cor}
\label{rorth} If $ | \psi \rangle $ and $ | \phi \rangle $
are orthogonal, then $r_{\psi}+r_{\phi}\leq 1$.
\end{cor}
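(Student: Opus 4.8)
The plan is to obtain this as an immediate consequence of Lemma \ref{rsupp}, combined with two elementary observations: that orthogonal states lie at Fubini--Study distance exactly $1$, and that orthogonal states always have trivial overlap. First I would note that if $|\psi\rangle$ and $|\phi\rangle$ are orthogonal, then $|\braket{\psi}{\phi}| = 0$, so by the definition of the distance used in this section, $||\psi-\phi|| = \frac{2}{\pi}\arccos(0) = 1$.

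Next I would invoke the observation recorded just before the definition of a maximally nontrivial theory: extending $\{|\psi\rangle,|\phi\rangle\}$ to an orthonormal measurement basis $M$ with $\phi_1 = \psi$ and $\phi_2 = \phi$, Equations (\ref{born}) and (\ref{cover}) force $\xi_{1,M} = 1$ (hence $\xi_{2,M} = 0$) for $\mu_\psi$-almost every $\lambda$, and symmetrically $\xi_{2,M} = 1$ (hence $\xi_{1,M} = 0$) for $\mu_\phi$-almost every $\lambda$. Thus $\mu_\psi$ and $\mu_\phi$ are mutually singular, i.e.\ they have total variation distance $1$, which is precisely the statement that $|\psi\rangle$ and $|\phi\rangle$ have trivial overlap. (Since this is already asserted in the text, I would simply cite it rather than reprove it.) Finally I would apply the appropriate direction of Lemma \ref{rsupp}: trivial overlap of $|\psi\rangle$ and $|\phi\rangle$ gives $||\psi-\phi|| \geq r_\psi + r_\phi$, and substituting $||\psi-\phi|| = 1$ yields $r_\psi + r_\phi \leq 1$, as claimed.

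I do not anticipate any genuine obstacle: the corollary follows in two lines once Lemma \ref{rsupp} is available. The only points meriting a moment's attention are the normalization of the metric (the factor $2/\pi$ is chosen exactly so that orthogonal states are at distance $1$), and the fact that Lemma \ref{rsupp} is stated for an arbitrary pair of states in a symmetric theory, so it applies here without invoking maximal nontriviality.
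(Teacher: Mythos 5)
Your proposal is correct and is essentially identical to the paper's own proof: both note that orthogonality gives $||\psi-\phi||=1$ and trivial overlap, then apply Lemma \ref{rsupp} to conclude $r_{\psi}+r_{\phi}\leq 1$. The extra justification you sketch for why orthogonal states have trivial overlap is the same observation the paper records earlier and simply cites, so nothing further is needed.
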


\begin{proof}
If $\braket{\psi}{\phi}=0$, then $||\psi-\phi||=1$. \ Since any orthogonal $\ket{\psi}$ and $\ket{\phi}$ have trivial overlap, Lemma \ref{rsupp} implies that $r_{\psi}+r_{\phi}\leq 1$.
\end{proof}

\begin{lem}
\label{rhalf} Given any maximally nontrivial and symmetric theory in $d\geq 3
$, for any state $ | \psi \rangle \in H_d$, we have $r_{\psi}=\frac{%
1}{2}$.
\end{lem}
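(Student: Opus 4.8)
The plan is to establish the two bounds $r_\psi \le \tfrac{1}{2}$ and $r_\psi \ge \tfrac{1}{2}$ separately, using only three facts already in hand: Corollary \ref{rorth} (orthogonal states satisfy $r_\psi + r_\phi \le 1$), Lemma \ref{rsupp} (two states have trivial overlap iff $||\psi-\phi|| \ge r_\psi + r_\phi$), and maximal nontriviality (any two non-orthogonal states have nontrivial overlap, hence satisfy $||\psi-\phi|| < r_\psi + r_\phi$).

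For the upper bound, fix $|\psi\rangle$; this is the only place the hypothesis $d \ge 3$ is used. The orthogonal complement $\psi^\perp$ is a subspace of complex dimension $d - 1 \ge 2$, so it contains pairs of \emph{non-orthogonal} states $|\phi_1\rangle, |\phi_2\rangle$, both orthogonal to $|\psi\rangle$, whose Fubini--Study distance $||\phi_1 - \phi_2||$ is arbitrarily close to $1$ (take an orthonormal pair inside the subspace and rotate one of them toward the other by a small angle). Since $|\phi_1\rangle$ and $|\phi_2\rangle$ are each orthogonal to $|\psi\rangle$, Corollary \ref{rorth} gives $r_{\phi_i} \le 1 - r_\psi$, while maximal nontriviality together with Lemma \ref{rsupp} gives $r_{\phi_1} + r_{\phi_2} > ||\phi_1 - \phi_2||$. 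Hence $2(1 - r_\psi) \ge r_{\phi_1} + r_{\phi_2} > ||\phi_1 - \phi_2||$, and letting $||\phi_1 - \phi_2|| \to 1$ forces $r_\psi \le \tfrac{1}{2}$.

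For the lower bound, fix $|\psi\rangle$ and choose $|\phi\rangle$ non-orthogonal to $|\psi\rangle$ with $||\psi - \phi||$ as close to $1$ as desired. Maximal nontriviality and Lemma \ref{rsupp} give $r_\psi + r_\phi > ||\psi - \phi||$, and since the upper bound just proved gives $r_\phi \le \tfrac{1}{2}$, we get $r_\psi > ||\psi - \phi|| - \tfrac{1}{2}$; letting $||\psi - \phi|| \to 1$ yields $r_\psi \ge \tfrac{1}{2}$, and the two bounds combine to $r_\psi = \tfrac{1}{2}$. The argument is short, so the only genuine care required is (i) verifying that $\psi^\perp$ really does contain non-orthogonal state pairs at Fubini--Study distance approaching $1$ --- which fails precisely when $d = 2$, since then $\psi^\perp$ is a single point, explaining the role of the hypothesis $d \ge 3$ --- and (ii) checking that the suprema defining $r_\psi$ and the limiting steps above are legitimate (each strict inequality holds for every approximating pair, so passing to the limit only weakens it to a non-strict inequality, which is all we need). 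I do not expect any obstacle beyond this bookkeeping.
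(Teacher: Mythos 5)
Your proposal is correct and follows essentially the same route as the paper: both bounds are obtained from Corollary \ref{rorth}, Lemma \ref{rsupp}, and maximal nontriviality, with the hypothesis $d\geq 3$ used exactly where you use it (to find non-orthogonal pairs inside $\psi^{\perp}$ at Fubini--Study distance arbitrarily close to $1$). The only difference is cosmetic: the paper argues by contradiction with an explicit $\varepsilon$, while you pass to the limit directly.
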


\begin{proof}
\begin{figure}[h] \centering
\includegraphics[height=5cm]{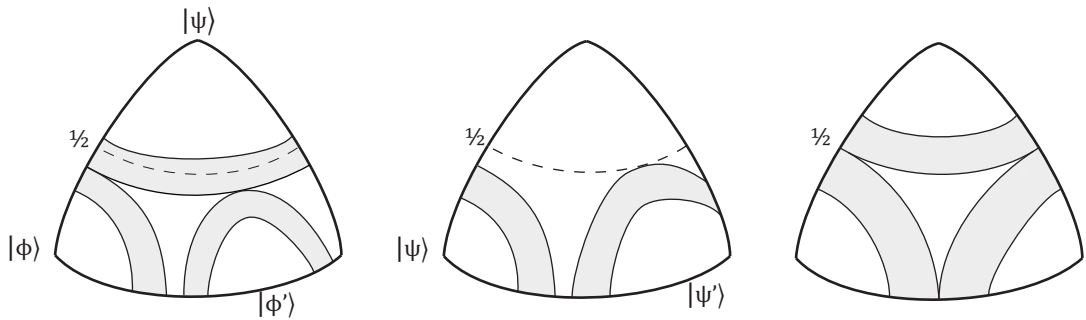}
\caption{From left to right: pictorial representations of the proof that $r_{\psi}\leq\frac{1}{2}$ in dimension $3$, the proof that $r_{\psi}=\frac{1}{2}$, and the form of the $\mu_{\psi}$'s that we ultimately deduce (with $r_{\psi}=\frac{1}{2}$ for all $\ket{\psi}\in H_d$). \ The shaded regions are the supports of the respective probability distributions.} \label{figlemma5}
\end{figure}
We first show that $r_{\psi}\leq \frac{1}{2}$ for all $\ket{\psi}\in H_d$, which we illustrate in the left side of Figure \ref{figlemma5} for the case where $d=3$. \ Suppose there exists $\ket{\psi}$ such that $r_{\psi}=\frac{1}{2}+\varepsilon$ for some $\varepsilon>0$. \ From Corollary \ref{rorth}, for all $\ket{\phi}$ orthogonal to $\ket{\psi}$, we have $r_{\phi}\leq \frac{1}{2}-\varepsilon$. \ In dimension $d\geq 3$, there exist non-orthogonal states $\ket{\phi}$, $\ket{\phi'}$ such that $\braket{\psi}{\phi}=\braket{\psi}{\phi'}=0$, and $\ket{\phi}\neq\ket{\phi'}$. \ Then $r_{\phi}+r_{\phi'}\leq 1-2\varepsilon$. \ If we choose $\ket{\phi}$, $\ket{\phi'}$ such that $1-2\varepsilon<||\phi-\phi'||<1$, then from Lemma \ref{rsupp}, we have that $\mu_{\phi}$ and $ \mu_{\phi'}$ have trivial overlap even though $\braket{\phi}{\phi'}\neq 0$. \ This contradicts the theory being maximally nontrivial.

We now show that $r_{\psi}\geq \frac{1}{2}$ for all $\ket{\psi}\in H_d$, as illustrated in the center of Figure \ref{figlemma5}. \ Suppose there exists $\ket{\psi}$ such that $r_{\psi}=\frac{1}{2}-\varepsilon$ for some $\varepsilon>0$. \ Since $r_{\psi'}\leq\frac{1}{2}$ for all $\ket{\psi'}\in H_d$, thus $r_{\psi}+r_{\psi'}\leq 1-\varepsilon$. \ If we choose $\ket{\psi}$, $\ket{\psi'}$ such that $1-\varepsilon < ||\psi-\psi'||<1$, then  $\mu_{\psi}$ and $ \mu_{\psi'}$ have trivial overlap from Lemma \ref{rsupp} even though $\braket{\psi}{\psi'}\neq 0$. \ This again contradicts maximum nontriviality.
\end{proof}

This immediately implies the following:

\begin{cor}
\label{maxnontriviallem} In dimensions $d\geq 3$, a symmetric $\psi$-epistemic theory is maximally
nontrivial if and only if for any state $ | \psi \rangle $ and for all $\delta >0$ the measure $%
\mu_\psi$ integrated over the following region is nonzero:
\begin{eqnarray}
\left\{\lambda:\frac{1}{2}\leq| \langle \psi | \lambda \rangle %
|^2\leq \frac{1}{2}+\delta\right\}  \label{maxnontrivial}
\end{eqnarray}
Moreover, $\supp(\mu_{\psi})$ has measure zero on the set of $\lambda$ such
that $| \langle \psi | \lambda \rangle |^2 <\frac{1}{2}$.
\end{cor}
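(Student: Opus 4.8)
The plan is to reduce the statement to facts about the radius $r_\psi$ of~(\ref{defr}) and then quote Lemmas~\ref{rhalf} and~\ref{rsupp}. First I would record two pieces of bookkeeping. (i) Since the Fubini--Study distance $||\psi-\lambda||=\frac{2}{\pi}\arccos(|\braket{\psi}{\lambda}|)$ is a continuous, strictly decreasing function of $|\braket{\psi}{\lambda}|^2$, equal to $\frac{1}{2}$ exactly when $|\braket{\psi}{\lambda}|^2=\frac{1}{2}$, the region in~(\ref{maxnontrivial}) is exactly the closed spherical shell $\{\lambda:\rho(\delta)\leq||\psi-\lambda||\leq\frac{1}{2}\}$, where $\rho(\delta)<\frac{1}{2}$ and $\rho(\delta)\uparrow\frac{1}{2}$ as $\delta\downarrow 0$, while $\{\lambda:|\braket{\psi}{\lambda}|^2<\frac{1}{2}\}=\{\lambda:||\psi-\lambda||>\frac{1}{2}\}$. (ii) By the symmetry hypothesis and the reduction to absolutely continuous $\mu_\psi$ made after Lemma~\ref{measure_pos}, $\mu_\psi$ has a density that is a function of $||\psi-\lambda||$ alone; hence $\supp(\mu_\psi)$ is a union of spheres of constant $|\braket{\psi}{\lambda}|^2$, each of which is a $\mu_\psi$-null set, so individual spheres may be ignored in all measure computations.

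The one step that needs care is the identity $r_\psi=R^\ast$, where $R^\ast:=\sup\{||\psi-\lambda||:\lambda\in\supp(\mu_\psi)\}$ (attained, since $\mathbb{CP}^{d-1}$ is compact and $\supp(\mu_\psi)$ is closed). For ``$r_\psi\leq R^\ast$'': if $r$ satisfies the condition in~(\ref{defr}), then each annulus $\{\lambda:r-\frac{1}{n}<||\psi-\lambda||<r\}$ has positive $\mu_\psi$-measure, hence meets $\supp(\mu_\psi)$; choosing points there and passing to a convergent subsequence produces a point of $\supp(\mu_\psi)$ at distance exactly $r$ from $\ket{\psi}$, so $r\leq R^\ast$. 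For ``$r_\psi\geq R^\ast$'': pick $\lambda^\ast\in\supp(\mu_\psi)$ realizing $R^\ast$; then for every $\delta>0$ the ball of radius $\delta$ about $\lambda^\ast$ has positive $\mu_\psi$-mass, all of which lies in $\{\lambda:R^\ast-\delta<||\psi-\lambda||<R^\ast\}$, since no support lies at distance exceeding $R^\ast$ and the sphere at distance $R^\ast$ is null; hence $R^\ast$ satisfies~(\ref{defr}), so $R^\ast\leq r_\psi$.

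Granting the identity $r_\psi=R^\ast$, the corollary follows quickly. ($\Rightarrow$) If the theory is maximally nontrivial, Lemma~\ref{rhalf} gives $r_\psi=\frac{1}{2}$, hence $R^\ast=\frac{1}{2}$, for every $\ket{\psi}$; thus $\supp(\mu_\psi)\subseteq\{\lambda:||\psi-\lambda||\leq\frac{1}{2}\}=\{\lambda:|\braket{\psi}{\lambda}|^2\geq\frac{1}{2}\}$, which is the ``moreover'' claim (indeed with an empty exceptional set). Moreover, $r_\psi=\frac{1}{2}$ together with the supremum in~(\ref{defr}) forces $\mu_\psi(\{\lambda:\frac{1}{2}-\eta<||\psi-\lambda||<\frac{1}{2}\})>0$ for every $\eta>0$, so by (i) $\mu_\psi$ has positive measure on the region~(\ref{maxnontrivial}) for every $\delta>0$. ($\Leftarrow$) Conversely, if $\mu_\psi$ has positive measure on~(\ref{maxnontrivial}) for every $\ket{\psi}$ and every $\delta>0$, then by (ii) $\mu_\psi$ has positive mass in the open shell $\{\lambda:\rho(\delta)<||\psi-\lambda||<\frac{1}{2}\}$ for every $\delta>0$, so $r=\frac{1}{2}$ satisfies~(\ref{defr}) and $r_\psi\geq\frac{1}{2}$ for every $\ket{\psi}$; then any non-orthogonal $\ket{\psi},\ket{\phi}$ satisfy $||\psi-\phi||<1\leq r_\psi+r_\phi$, so Lemma~\ref{rsupp} gives $\ket{\psi}$ and $\ket{\phi}$ nontrivial overlap, i.e.\ the theory is maximally nontrivial.

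The main obstacle is the ``$r_\psi\geq R^\ast$'' half of the identity, which is what delivers the ``moreover'' (no $\mu_\psi$-mass beyond distance $\frac{1}{2}$): a priori, $r_\psi=\frac{1}{2}$ only says that every $r$ in the set appearing in~(\ref{defr}) is $\leq\frac{1}{2}$, which forbids mass accumulating from below at each radius exceeding $\frac{1}{2}$ but not, by itself, mass sitting beyond $\frac{1}{2}$; upgrading this uses the symmetry of $\mu_\psi$ (so $\supp(\mu_\psi)$ is a union of constant-overlap spheres, each $\mu_\psi$-null) together with compactness of $\mathbb{CP}^{d-1}$, as in the second paragraph. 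The remaining facts --- positive mass just inside distance $\frac{1}{2}$, and the implication for maximal nontriviality via Lemma~\ref{rsupp} --- follow directly from the definitions and the change of variables of (i).
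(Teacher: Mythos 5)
Your proof is correct and follows the same route as the paper's: reduce everything to $r_\psi=\frac{1}{2}$ via Lemma \ref{rhalf}, convert inner products to Fubini--Study distances, and unpack the definition of $r_\psi$ in Equation (\ref{defr}), with Lemma \ref{rsupp} handling the converse direction. The paper's own proof is a one-liner; your identity $r_\psi=R^\ast$ (obtained from the farthest support point together with the nullity of individual constant-overlap spheres) supplies precisely the detail needed for the ``moreover'' clause, which the paper leaves implicit.
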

\begin{proof}
By Lemma \ref{rhalf}, for any state $\ket{\psi}$ we have $r_{\psi}=\frac{1}{2}$. \ By rewriting the distance between states in terms of their inner product, the corollary follows from the definition of $r_{\psi}$ in Equation \ref{defr}.
\end{proof}

In Lemma \ref{rhalf}, we showed that the radius $r_\psi$ of every state $\psi$ in a maximally nontrivial symmetric theory is $\frac{1}{2}$. \ We now use this to show that a certain set of ontic states is deficient. \ Recall that we say a set $S$ is deficient for measurement $M$ if $S$ is not in $\supp(\mu_{\phi_i})$ for any $\phi_i \in M$.

\begin{cor}
\label{deficient} Given any symmetric, maximally nontrivial $\psi$-epistemic
theory in $d\geq 3$, for any measurement basis $M=\{\phi_i\}_{i=1}^d$, the
region \[R_M = \left\{\lambda:| \langle \phi_i | \lambda \rangle |^2 <%
\frac{1}{2}\,, i=1,\ldots,d\right\}\] is deficient except on a set of measure
zero. \ (Note that, by elementary geometry, $R_M$ has positive measure if and only if $d\geq 3$.)
\end{cor}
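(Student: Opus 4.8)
The plan is to obtain this as an essentially immediate consequence of Corollary \ref{maxnontriviallem}, which asserts that in any symmetric, maximally nontrivial theory with $d \geq 3$, for every state $\ket{\psi}$ the support $\supp(\mu_\psi)$ has measure zero on $\{\lambda : |\langle\psi|\lambda\rangle|^2 < \tfrac12\}$. In other words, each $\supp(\mu_{\phi_i})$ is confined, up to a null set, to the region where $|\langle\phi_i|\lambda\rangle|^2 \geq \tfrac12$, whereas by definition every $\lambda \in R_M$ satisfies $|\langle\phi_i|\lambda\rangle|^2 < \tfrac12$ for \emph{all} $i$. So $R_M$ sits inside the ``forbidden'' region of each $\mu_{\phi_i}$ simultaneously.

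Concretely, I would fix the basis $M = \{\phi_i\}_{i=1}^d$, and for each index $i$ note that $R_M \subseteq \{\lambda : |\langle\phi_i|\lambda\rangle|^2 < \tfrac12\}$, hence $\supp(\mu_{\phi_i}) \cap R_M$ is contained in the measure-zero set supplied by Corollary \ref{maxnontriviallem} and is therefore itself of measure zero. Taking the union over $i = 1,\dots,d$, the set $N := \bigcup_{i=1}^d \bigl(\supp(\mu_{\phi_i}) \cap R_M\bigr)$ is a finite union of null sets, hence null. For every $\lambda \in R_M \setminus N$ and every $i$ we have $\lambda \notin \supp(\mu_{\phi_i})$, which is exactly the statement that $R_M$ is deficient for $M$ except on the measure-zero set $N$.

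It remains to justify the parenthetical claim that $R_M$ has positive measure precisely when $d \geq 3$. For $d = 2$, orthonormality of $M$ forces $|\langle\phi_1|\lambda\rangle|^2 + |\langle\phi_2|\lambda\rangle|^2 = 1$, so the two quantities cannot both be below $\tfrac12$ and $R_M = \emptyset$. For $d \geq 3$, the state $\ket{\lambda_0}$ proportional to $\sum_{i=1}^d \ket{\phi_i}$ has $|\langle\phi_i|\lambda_0\rangle|^2 = 1/d < \tfrac12$ for every $i$, and since each $\lambda \mapsto |\langle\phi_i|\lambda\rangle|^2$ is continuous on $\mathbb{CP}^{d-1}$, an entire open neighborhood of $\ket{\lambda_0}$ lies in $R_M$; an open set has positive measure, which finishes this point.

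I do not anticipate a genuine obstacle: the content is entirely in Corollary \ref{maxnontriviallem}, and the only care required is the routine bookkeeping that a finite union of measure-zero sets is measure zero, together with the elementary geometric fact about when $R_M$ is nonempty. The subtlety worth flagging, if any, is simply to keep the ``up to a null set'' qualifier attached to $\supp(\mu_{\phi_i})$ throughout, rather than claiming $R_M$ is outright disjoint from the supports.
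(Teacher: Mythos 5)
Your proof is correct and follows essentially the same route as the paper: apply Corollary \ref{maxnontriviallem} to each $\phi_i$, note $R_M$ lies in the region where each $\supp(\mu_{\phi_i})$ is null, and take a finite union of null sets; your justification of the parenthetical is also the intended elementary argument. The paper's proof adds only the remark that, by Equation (\ref{cover}), every $\lambda\in R_M$ must still lie in $\nz(\xi_{i,M})$ for some $i$ — the ``one-sided friendship'' that motivates the word \emph{deficient} — but relative to the set-level definition stated just before the corollary your argument is complete.
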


\begin{proof}
By Corollary \ref{maxnontriviallem}, for all $i=1,\ldots,d$, the set $\supp(\phi_i)$ must have measure zero over the region $R_M$. \ However, Equation \ref{cover} implies that any $\lambda\in R_M$ must be in $\nz(\xi_{i,M})$ for some $i$ even if it is not in $\supp(\phi_i)$. \ This means that $R_M$ is deficient except possibly on a set of measure zero.
\end{proof}

In general, deficiency occurs in any theory in $d\geq 3$ even without the
symmetry assumption, as proved by Harrigan and Rudolph \cite{hr} using the
Kochen-Specker theorem \cite{ks}. \ In Corollary \ref{deficient}, we showed
that symmetry implies a specific \emph{type} of deficiency.

To show that no strongly symmetric, maximally nontrivial theory exists, we first prove two simple
results for $\psi$-epistemic theories in general. \ These results will help us to derive a contradiction
for strongly symmetric, maximally nontrivial theories.

\begin{lem}
\label{orthogonal} Given any two orthogonal states $ | \phi \rangle $ and $%
 | \psi \rangle $, the set $\supp(\mu_{\phi})\cap \nz(\xi_{\psi,M})$ has
measure zero for all measurements $M$ that contain $\psi$.
\end{lem}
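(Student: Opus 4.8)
The plan is to use the Born rule (Equation \ref{born}) together with the fact that the response functions are nonnegative. Fix orthogonal states $\ket{\phi}$ and $\ket{\psi}$, and let $M$ be any measurement basis containing $\ket{\psi}$, say $M = \{\phi_1, \ldots, \phi_d\}$ with $\phi_1 = \psi$. Apply Equation \ref{born} to the state $\ket{\phi}$ and the outcome $\psi$:
\[
\int_\Lambda \xi_{\psi,M}(\lambda)\,\mu_\phi(\lambda)\,d\lambda = |\braket{\psi}{\phi}|^2 = 0,
\]
since $\ket{\psi}$ and $\ket{\phi}$ are orthogonal.

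The integrand $\xi_{\psi,M}(\lambda)\,\mu_\phi(\lambda)$ is nonnegative everywhere, because $\xi_{\psi,M}(\lambda) \in [0,1]$ and $\mu_\phi$ is a probability measure. A nonnegative integrand with integral zero must vanish $\mu_\phi$-almost everywhere; that is, $\xi_{\psi,M}(\lambda) = 0$ for $\mu_\phi$-almost every $\lambda$. Equivalently, the set $\{\lambda : \xi_{\psi,M}(\lambda) \neq 0\} = \nz(\xi_{\psi,M})$ has $\mu_\phi$-measure zero. Since $\supp(\mu_\phi)$ is, by definition, a set carrying the full measure $\mu_\phi$, the portion of $\nz(\xi_{\psi,M})$ lying inside $\supp(\mu_\phi)$ still has $\mu_\phi$-measure zero, and hence (because on $\supp(\mu_\phi)$ the measure $\mu_\phi$ dominates Lebesgue measure in the symmetric setting, or simply because we are measuring with $\mu_\phi$ itself) the set $\supp(\mu_\phi) \cap \nz(\xi_{\psi,M})$ has measure zero. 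This is exactly the claim.

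There is essentially no obstacle here: the only subtlety is a bookkeeping point about which measure ``measure zero'' refers to. The cleanest formulation is that $\supp(\mu_\phi) \cap \nz(\xi_{\psi,M})$ has $\mu_\phi$-measure zero, which is all that is needed downstream (it says that $\mu_\phi$ assigns no weight to ontic states that could ever report the outcome $\psi$). If one instead wants Lebesgue measure zero on $\mathbb{CP}^{d-1}$, one invokes the symmetry/absolute-continuity reduction already in force (after Lemma \ref{measure_pos} we assume $\mu_\phi$ is generated from its absolutely continuous part), so that $\mu_\phi$-null and Lebesgue-null agree on $\supp(\mu_\phi)$. Either way the argument is a one-line consequence of $\int \xi_{\psi,M}\,d\mu_\phi = 0$ and nonnegativity of the integrand.
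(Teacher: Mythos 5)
Your proof is correct and is essentially the paper's argument run forwards instead of by contradiction: both hinge on the Born rule forcing $\int_\Lambda \xi_{\psi,M}\,d\mu_\phi = |\braket{\psi}{\phi}|^2 = 0$ together with nonnegativity of the integrand. Your added care about which measure ``measure zero'' refers to is a reasonable clarification of a point the paper's one-line proof leaves implicit, but it is not a different approach.
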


\begin{proof}
Suppose to the contrary that $\supp(\mu_{\phi})\cap \nz(\xi_{\psi,M})$ has positive measure for some measurement $M$ containing $\psi$. \ Then by definition, if the state $\ket{\phi}$ is measured using $M$, the outcome corresponding to $\ket{\psi}$ is returned with nonzero probability. \ But since $|\braket{\psi}{\phi}|^2=0$, this contradicts the Born rule (Equation (\ref{born})).
\end{proof}

\begin{lem}
\label{response} For any $\alpha\in\Lambda$, let $B_{\varepsilon}(\alpha)=\left%
\{\lambda:||\lambda-\alpha||<\varepsilon\right\}$ be an $\varepsilon$-ball around $
\alpha$, for some $\varepsilon>0$. \ Given a measurement basis $%
M=\{\phi_i\}_{i=1}^d$, there exists some $j$ such that
\begin{eqnarray}
\int_{B_\varepsilon(\alpha)} \xi_{j,M}(\lambda) \,d\lambda >0.
\end{eqnarray}
\end{lem}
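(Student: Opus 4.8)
The plan is a one-line pigeonhole argument off the normalization condition (\ref{cover}). The key observation is that $B_{\varepsilon}(\alpha)$, being a nonempty open subset of $\Lambda=\mathbb{CP}^{d-1}$ (or of $U(d)$ in the generalization of Section~\ref{nonexistence}), has strictly positive measure under the natural reference measure, i.e.\ the Fubini--Study measure (resp.\ the Haar measure). Apart from this elementary geometric fact, the argument uses nothing beyond (\ref{cover}).

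First I would integrate the identity $\sum_{i=1}^{d}\xi_{i,M}(\lambda)=1$, which holds for every $\lambda$ by (\ref{cover}), over the ball $B_{\varepsilon}(\alpha)$. Each $\xi_{i,M}$ is a measurable function with values in $[0,1]$, so the finitely many nonnegative terms may be summed past the integral, yielding
\[
\sum_{i=1}^{d}\int_{B_{\varepsilon}(\alpha)}\xi_{i,M}(\lambda)\,d\lambda \;=\; \int_{B_{\varepsilon}(\alpha)}1\,d\lambda \;>\;0 .
\]
The left-hand side is a sum of $d$ nonnegative reals that is strictly positive, hence at least one summand is strictly positive; taking $j$ to be a corresponding index completes the proof.

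I do not expect any genuine obstacle here: the statement invokes none of the structural hypotheses of the paper (no symmetry, no maximal nontriviality, not even the Born rule (\ref{born})), only the covering condition (\ref{cover}). The sole point meriting a word of care is the claim that the ambient measure assigns positive mass to open balls, which is immediate since the Fubini--Study measure on $\mathbb{CP}^{d-1}$ (and the Haar measure on $U(d)$) is strictly positive on every nonempty open set.
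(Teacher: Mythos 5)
Your argument is exactly the paper's: integrate the covering identity (\ref{cover}) over $B_{\varepsilon}(\alpha)$, note that the ball has positive measure so the integral equals a positive number, and conclude by pigeonhole that some summand $\int_{B_{\varepsilon}(\alpha)}\xi_{j,M}\,d\lambda$ is positive. The extra remarks about measurability and the strict positivity of the reference measure on open sets are correct and only make explicit what the paper leaves implicit.
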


\begin{proof}
For any such $\alpha\in\Lambda$ the following holds,
\ban \int_{B_\varepsilon(\alpha)} \sum_{i=1}^{d} \xi_{i,M}(\lambda)\,d\lambda =\int_{B_\varepsilon(\alpha)} 1\,d\lambda >0. \ean

\noindent This then implies that there exists some $j$ such that
\ban \int_{B_\varepsilon(\alpha)} \xi_{j,M}(\lambda) \,d\lambda >0. \ean
\end{proof}

\begin{figure}[h]
\centering \includegraphics[height=6cm]{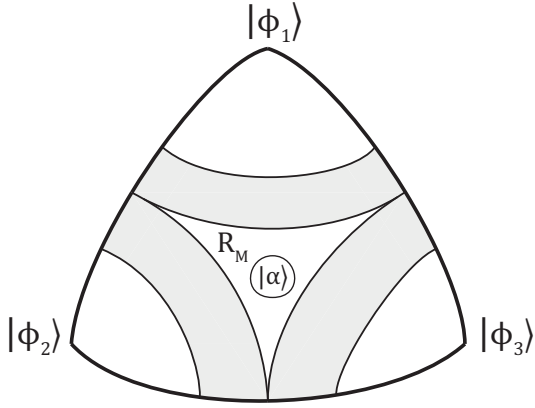} \vspace{-20pt}
\caption{Pictorial representation of the deficiency region for $d=3$. \ The
shaded regions are the supports of the respective probability measures,
and the middle unshaded region $R_M$ is deficient.} \label{figdeficiency}
\end{figure}

Using these two results, we can now prove that in dimension $d\geq 3$, there exists no strongly symmetric, maximally nontrivial $\psi$-epistemic theory.

\begin{thm}
\label{strong-symmetry-thm}
There exists no strongly symmetric, maximally nontrivial $\psi$-epistemic theory in
dimension $d\geq 3$.
\end{thm}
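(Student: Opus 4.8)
The plan is to reach a contradiction with Lemma~\ref{orthogonal} (equivalently, with the Born rule~(\ref{born})): I will exhibit a measurement basis $M=\{\phi_1,\dots,\phi_d\}$, a basis vector $|\phi_1\rangle\in M$, and a state $|\phi\rangle$ orthogonal to $|\phi_1\rangle$ with $\mu_\phi\bigl(\nz(\xi_{1,M})\bigr)>0$ --- so that measuring $|\phi\rangle$ in $M$ returns the outcome $|\phi_1\rangle$ with positive probability even though $\langle\phi_1|\phi\rangle=0$. To set this up, I fix an arbitrary basis $M=\{\phi_1,\dots,\phi_d\}$ and consider the deficient region $R_M=\{\lambda:|\langle\phi_i|\lambda\rangle|^2<\tfrac12,\ i=1,\dots,d\}$, which is open and, since $d\ge 3$, has positive measure (the elementary remark in Corollary~\ref{deficient}). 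Choosing $\alpha\in R_M$ and $\varepsilon>0$ with $B_\varepsilon(\alpha)\subseteq R_M$, Lemma~\ref{response} gives an index $j$ with $\int_{B_\varepsilon(\alpha)}\xi_{j,M}\,d\lambda>0$, so $\nz(\xi_{j,M})\cap B_\varepsilon(\alpha)$ has positive measure; after relabelling I may take $j=1$, and I set $G:=\nz(\xi_{1,M})\cap B_\varepsilon(\alpha)$, a set of positive measure which, lying inside $R_M$, satisfies $|\langle\phi_1|\lambda\rangle|^2<\tfrac12$ for every $\lambda\in G$.

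The heart of the proof is then to produce $|\phi\rangle$ orthogonal to $|\phi_1\rangle$ with $\mu_\phi(G)>0$; since $G\subseteq\nz(\xi_{1,M})$, $|\phi_1\rangle\in M$, and $\mu_\phi$ is absolutely continuous, this forces $\supp(\mu_\phi)\cap\nz(\xi_{1,M})$ to have positive measure, contradicting Lemma~\ref{orthogonal}. Here strong symmetry is essential: there is a \emph{single} density $f$ with $\mu_\psi=f\bigl(|\langle\psi|\lambda\rangle|^2\bigr)$ for all $\psi$, and Lemma~\ref{rhalf} together with Corollary~\ref{maxnontriviallem} gives $r_\psi=\tfrac12$, so $f$ vanishes a.e.\ on $[0,\tfrac12)$ while $\int_{1/2}^{1/2+\delta}f\,dt>0$ for all $\delta>0$ --- equivalently, $f>0$ on a set of positive Lebesgue measure inside every interval $(\tfrac12,\tfrac12+\delta)$. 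I then average $\mu_\phi(G)$ over all states $|\phi\rangle$ orthogonal to $|\phi_1\rangle$ (a copy of $\mathbb{CP}^{d-2}$), with the unitarily invariant measure $d\phi$:
\[ \int_{\phi\perp\phi_1}\mu_\phi(G)\,d\phi=\int_{G}\Bigl(\int_{\phi\perp\phi_1} f\bigl(|\langle\phi|\lambda\rangle|^2\bigr)\,d\phi\Bigr)\,d\lambda . \]
Fix $\lambda\in G$ and put $s=|\langle\phi_1|\lambda\rangle|^2<\tfrac12$; for $|\phi\rangle\perp|\phi_1\rangle$ one has $|\langle\phi|\lambda\rangle|^2=(1-s)\,|\langle\phi|\mu_\lambda\rangle|^2$, where $|\mu_\lambda\rangle$ is the normalized projection of $|\lambda\rangle$ onto $\phi_1^\perp$. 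As $|\phi\rangle$ ranges uniformly over $\mathbb{CP}^{d-2}$ the variable $|\langle\phi|\mu_\lambda\rangle|^2$ has a density strictly positive on $[0,1)$, so $|\langle\phi|\lambda\rangle|^2$ has a density strictly positive on $[0,1-s)$. Since $1-s>\tfrac12$ and $f>0$ on a positive-measure subset of $(\tfrac12,1-s)$, the inner integral is strictly positive for every $\lambda\in G$; as $G$ has positive measure, the left-hand side is strictly positive, whence $\mu_\phi(G)>0$ for a positive-measure set of $|\phi\rangle\perp|\phi_1\rangle$, in particular for some such $\phi$, which closes the argument.

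I expect the averaging step to be the main obstacle, and within it the delicate point is verifying that the inner integral is nonzero for \emph{every} $\lambda\in G$: this uses both the exact value $r_\psi=\tfrac12$ (so that $f$ carries mass arbitrarily close to $\tfrac12$ from above and none below it) and the differential-geometric fact that projecting $|\lambda\rangle$ onto $\phi_1^\perp$ through a non-orthogonal direction (non-orthogonal since $|\langle\phi_1|\lambda\rangle|^2<\tfrac12<1$) spreads $|\langle\phi|\lambda\rangle|^2$ absolutely continuously over an interval reaching past $\tfrac12$. It is crucial that $f$ does not depend on $\phi$: in the merely symmetric case the density $f_\phi$ varies with the state, the single averaging identity above no longer closes the argument, and a substantially more involved argument using more measure theory and differential geometry is required --- which is precisely why the strongly symmetric case is treated first and separately.
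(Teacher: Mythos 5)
Your proposal is correct and follows essentially the same route as the paper: locate the deficiency region $R_M$ via Corollary~\ref{deficient}, use Lemma~\ref{response} to find a positive-measure subset of it on which $\xi_{1,M}$ is nonzero, then average the (common, by strong symmetry) ontic density over all states orthogonal to $\phi_1$ and show the average charges that set, contradicting Lemma~\ref{orthogonal}. Your version actually spells out the averaging step more carefully than the paper does (the Fubini interchange and the pointwise positivity of the inner integral for every $\lambda$ with $|\langle\phi_1|\lambda\rangle|^2<\tfrac12$), which correctly covers the fact that the offending set need only have positive Lebesgue measure rather than be open.
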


\begin{proof}
Suppose we have a symmetric, maximally nontrivial theory in dimension $d\geq 3$, and we fix a measurement basis $M=\{\phi_i\}_{i=1}^d$. \ From Corollary \ref{deficient}, there exists a deficiency region given by
\[ R_M = \left\{\lambda:|\braket{\phi_i}{\lambda}|^2 <\frac{1}{2}\,, i=1,\ldots,d\right\}, \]
\noindent perhaps minus a set of measure zero. \ This is illustrated in Figure \ref{figdeficiency} for the case where $d=3$.

Consider $\ket{\alpha}=\frac{1}{\sqrt{d}}\left(\ket{\phi_1}+\cdots +\ket{\phi_d}\right)$, which is contained in the deficiency region. \
Given $\varepsilon >0$, let $B_\varepsilon (\alpha)=\{\lambda:||\lambda-\alpha||<\varepsilon\}$ be the $\varepsilon$-ball around $\ket{\alpha}$. We choose $\varepsilon$ such that $B_\varepsilon (\alpha)$ is contained in $R_M$. \ From Lemma \ref{response}, there exists some $j$ such that $B := B_\varepsilon(\alpha)\cap\nz(\xi_{j,M})$ has nonzero measure. \  Without loss of generality, we assume that $j=1$.

Let $\nu$ be the measure obtained by averaging $\mu_{\psi}$ over all states $\ket{\psi}$ orthogonal to $\ket{\phi_1}$, and let $A$ be the set of all $\lambda$ such that $|\braket{\phi_1}{\lambda}|^2<\frac{1}{2}$. \ Since the theory is strongly symmetric, $\nu$ must be a function only of $|\braket{\phi_1}{\lambda}|^2$. \ Moreover, each of the measures $\mu_{\psi}$ assigns positive measure to the region of states $\lambda$ such that $|\braket{\psi}{\lambda}|^2$ is close to $\frac{1}{2}$, hence the averaged measure $\nu$ assigns positive measure to every open subset of $A$, and therefore in particular to $B$. \ This contradicts Lemma \ref{orthogonal}, which implies that each of the averaged measures $\mu_{\psi}$ must assign zero measure to $\nz(\xi_{1,M})$ and hence $B$.
\end{proof}

\subsection{Proof of Generalized No-Go Theorem}

We now generalize our proof of Theorem \ref{strong-symmetry-thm} to the ``merely" symmetric case, where the probability distributions $\mu_\psi$ can vary with $\psi$. \ First note that our previous proof does not immediately carry over. \ Since the probability distributions can vary as $\psi$ changes, it is possible that the distributions for states orthogonal to $\phi_1$ might be able to ``evade" the set $B$ in the proof of Theorem \ref{strong-symmetry-thm} which returns answer $\phi_1$ under measurement $M$, while maintaining some density near their outer radii.

To see how this might occur, consider the following one dimensional example: Let $\Lambda = \mathbb{R}$ be the real line. \ Construct $B\subseteq [0,1]$ to be a ``fat Cantor set" on $[0,1]$ as follows. \ Initially set $B=[0,1]$. \ In step 1, remove the middle $1/4$ of this interval, so that $B=[0,\frac{3}{8}]\cup[\frac{5}{8},1]$. \ At the $i^{\mathrm{th}}$ step, remove the middle $\frac{1}{2^{2i}}$ of each of the $2^i$ remaining intervals. \ Continue indefinitely. \ The resulting set $B$ is called a ``fat Cantor set" because it is nowhere dense (so contains no intervals), yet has positive Lebesgue measure on $[0,1]$.

For each point $x\in \mathbb{R}$, let $\mu_x$ be the uniform distribution on $[x-1,x+1]$ with $B$ removed. \ Then $\mu_x$ is absolutely continuous with respect to the Lebesgue measure for all $x\in \mathbb{R}$, and furthermore has positive measure on $[1+x-\varepsilon, 1+x]$ for all $\varepsilon>0$. \ However, despite the fact that $B$ has positive measure, the distributions $\mu_x$ never intersect $B$. \ The worry is that our distributions in $\mathbb{CP}^{d-1}$ could likewise evade the set $B$ in our proof, foiling our contradiction. \ This worry is related to a variant of the Kakeya/Besicovitch problem, as we discuss in Section \ref{sec-conclusion}.

We can extend Theorem \ref{strong-symmetry-thm} without solving a Kakeya-like problem, but to do so we will need a result about the differential geometry of $\mathbb{CP}^{d-1}$. \ Interestingly, we will use the fact that we are working in a complex Hilbert space; we believe the proof could be adapted to a real Hilbert space, but it would be much less convenient.

Discussing the differential geometry of $\mathbb{CP}^{d-1}$ is easiest if we first to pick a gauge for $\mathbb{CP}^{d-1}$, that is, if we pick a representative from each equivalence class of vectors which differ only by a global phase. \ We use the following gauge: let $\ket{\alpha}=\frac{1}{\sqrt{d}}\left(\ket{\phi_1}+\cdots +\ket{\phi_d}\right)$. \ For each equivalence class, we pick a representative $u$ such that $\braket{\alpha}{u}$ is real and positive. \ This uniquely identifies representatives for all equivalence classes of states, except those orthogonal to $\alpha$. \ Moreover, this way of choosing a gauge is continuous and smooth near $\alpha$; more precisely, equivalence classes which are close to one another have representatives which are also close to one another. \ This allows us to integrate over the manifold near $\alpha$ using these representatives. \ Using this gauge, we now prove the following.

\begin{lem}
\label{ui_lemma}
Let $M$ be a measurement basis $\{\phi_i \}$, let $\ket{\alpha}$ be defined as above, and let $d\geq 3$. \ Then there exist $d$ vectors $u_1 \ldots u_d$ in $\mathbb{CP}^{d-1}$ such that
\begin{itemize}

\item $\braket{u_i}{\phi_i} =0$ for all $i$.

\item $\braket{u_i}{\alpha}=\frac{1}{\sqrt{2}}$ for all $i$.

\item The tangent vectors $t_i$ to the geodesics from $u_i$ to $\alpha$ are linearly independent at $\alpha$ when the tangent space is viewed as a real vector space.

\end{itemize}
\end{lem}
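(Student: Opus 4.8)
The plan is to construct the $u_i$ explicitly from the measurement vectors $\phi_i$, and then use a perturbation/genericity argument to arrange linear independence of the tangent vectors. First I would write down a natural candidate: for each $i$, set $\ket{u_i^{(0)}}$ to be the normalized projection of $\ket{\alpha}$ onto the hyperplane orthogonal to $\ket{\phi_i}$, i.e. $\ket{u_i^{(0)}} \propto \ket{\alpha} - \braket{\phi_i}{\alpha}\ket{\phi_i}$. Since $\braket{\phi_i}{\alpha} = 1/\sqrt{d}$, this gives $\braket{u_i^{(0)}}{\phi_i}=0$ automatically, and $|\braket{u_i^{(0)}}{\alpha}|^2 = 1 - 1/d \geq 2/3 > 1/2$ for $d\geq 3$. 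So $\ket{u_i^{(0)}}$ has \emph{too much} overlap with $\alpha$; I then need to ``rotate it away'' from $\alpha$ within the hyperplane $\phi_i^\perp$ until $|\braket{u_i}{\alpha}|^2 = 1/2$. Because $d\geq 3$, the hyperplane $\phi_i^\perp$ has complex dimension $d-1\geq 2$, so there is a whole sphere's worth of directions in which to rotate, and after fixing the gauge (so that $\braket{\alpha}{u_i}$ is real positive) each choice is parameterized by a unit vector in the real tangent space to the sphere of states at radius-$\frac12$-from-$\alpha$ inside $\phi_i^\perp$. This is where the freedom to make the $t_i$ independent comes from.

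Next I would compute the tangent vector $t_i$ to the geodesic from $u_i$ to $\alpha$, evaluated at $\alpha$. In the chosen gauge, this tangent vector is (up to normalization and a metric factor) the component of $\ket{u_i} - \braket{\alpha}{u_i}\ket{\alpha}$ orthogonal to $\ket{\alpha}$ — i.e. the ``horizontal lift'' of $u_i$ at $\alpha$, viewed as an element of the real $2(d-1)$-dimensional tangent space $T_\alpha \mathbb{CP}^{d-1}$. Writing $\ket{u_i} = \frac{1}{\sqrt2}\ket{\alpha} + \frac{1}{\sqrt 2}\ket{w_i}$ with $\braket{\alpha}{w_i}=0$ and $\|w_i\|=1$, the constraint $\braket{u_i}{\phi_i}=0$ becomes a single linear (real: two real) constraint relating $\ket{w_i}$ to $\ket{\phi_i}$, and $t_i$ is proportional to $\ket{w_i}$. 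So the question reduces to: can we choose unit vectors $w_1,\dots,w_d \in \alpha^\perp \cong \mathbb{C}^{d-1}$, each satisfying the single complex linear constraint $\braket{w_i}{\phi_i} = -\frac{1}{\sqrt 2}\braket{\alpha}{\phi_i}\cdot\text{(phase)} $ — equivalently $w_i$ lies on an affine subspace of $\alpha^\perp$ of complex codimension $1$, intersected with the unit sphere — such that $w_1,\dots,w_d$ are linearly independent over $\mathbb{R}$ in $\mathbb{C}^{d-1}$ (which has real dimension $2(d-1)\geq 2d-2\geq d$ for $d\geq 2$).

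The cleanest way to finish is a dimension count plus genericity. Each admissible $w_i$ ranges over a real manifold of dimension $2(d-1)-2 = 2d-4$, which is positive for $d\geq 3$, so each $w_i$ has genuine freedom. The set of $d$-tuples $(w_1,\dots,w_d)$ that are $\mathbb{R}$-linearly dependent is a proper algebraic subvariety of the product of these manifolds (it is cut out by the vanishing of all $d\times d$ minors of the $2(d-1)\times d$ real matrix of coordinates), so as long as the product manifold is not entirely contained in it, a generic choice works. To rule out the degenerate case where \emph{every} admissible tuple is dependent, I would exhibit one explicitly independent tuple — e.g. start from $w_i^{(0)} \propto \ket{\alpha}$-independent piece of $\ket{\phi_i}$ itself (the $\phi_i$ are linearly independent, being an orthonormal basis, and their projections onto $\alpha^\perp$ remain linearly independent provided no $\phi_i$ equals $\alpha$, which holds), then check these particular $w_i^{(0)}$ satisfy the constraint after a small correction and perturb. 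The main obstacle I anticipate is precisely this last point — verifying that the linear-independence locus is not accidentally forced by the constraints $\braket{w_i}{\phi_i} = c_i$, since the $i$-th constraint ``pushes'' $w_i$ toward the $d-2$ dimensional space $\{\phi_j\}_{j\neq i}^\perp \cap \alpha^\perp$ and one must make sure these $d$ shifted subspaces still jointly span. Here is exactly where the remark in the excerpt about using a \emph{complex} Hilbert space pays off: the extra real dimension ($2(d-1)$ versus $d-1$) gives enough room that the span condition is satisfied, whereas in the real case $d-1$ would be too small once $d\geq 3$ and one would need a more delicate argument.
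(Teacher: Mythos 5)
Your reduction is sound and matches the paper's setup: writing $\ket{u_i}=\tfrac{1}{\sqrt2}\ket{\alpha}+\tfrac{1}{\sqrt2}\ket{w_i}$ with $w_i\in\alpha^\perp$, the tangent vector $t_i$ is indeed proportional to $w_i$ (the paper computes $t_i=u_i-\braket{\alpha}{u_i}\alpha$), and the problem does reduce to choosing unit vectors $w_1,\dots,w_d$ in $\alpha^\perp\cong\mathbb{C}^{d-1}$, each constrained to an affine complex hyperplane by $\braket{w_i}{\phi_i}=-\tfrac{1}{\sqrt d}$, so that they are $\mathbb{R}$-linearly independent. (Minor slips: the $\tfrac{1}{\sqrt2}$ in your constraint cancels, and the unit-norm condition drops the admissible manifold's dimension to $2d-5$, still positive for $d\ge 3$.) The genuine gap is in the last step. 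Your genericity argument needs a witness: one admissible tuple that is actually independent. The witness you propose --- the projections of the $\phi_i$ onto $\alpha^\perp$ --- is \emph{real-linearly dependent}: since $\ket{\alpha}=\tfrac{1}{\sqrt d}\sum_i\ket{\phi_i}$, the projections satisfy $\sum_{i=1}^d\bigl(\ket{\phi_i}-\tfrac{1}{\sqrt d}\braket{\phi_i}{\alpha}^*\ket{\alpha}\bigr)=\sqrt d\ket{\alpha}-\sqrt d\ket{\alpha}=0$, a dependence with all-ones (real) coefficients. Your parenthetical claim that the projections ``remain linearly independent provided no $\phi_i$ equals $\alpha$'' is false --- $d$ vectors projected into a complex $(d-1)$-dimensional space are always $\mathbb{C}$-dependent, and here the dependency happens to be real. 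Since any real linear combinations of these projections stay inside a real $(d-1)$-dimensional subspace, no small perturbation \emph{within that real span} can rescue independence; one must introduce genuinely imaginary components, which is exactly the point of the paper's remark about complex Hilbert space. Your candidate also fails the constraint by a non-small amount ($\braket{w_i^{(0)}}{\phi_i}=-\sqrt{1-1/d}$ rather than $-1/\sqrt d$), so ``a small correction and perturb'' does not close the argument.

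The gap is fixable within your framework, but the fix is essentially the content of the lemma. One clean route: choose the $w_i$ greedily. The admissible set for $w_k$ is a $(2d-5)$-sphere whose real linear span has dimension $2d-3$, while the span of any previously chosen $w_1,\dots,w_{k-1}$ has dimension at most $d-1<2d-3$ for $d\ge 3$, so an admissible $w_k$ outside that span always exists. The paper instead writes down the tuple explicitly: $u_i=\sum_{j\ne i}a\ket{\phi_j}+ib\ket{\phi_{i+1}}-ib\ket{\phi_{i+2}}$ (indices cyclic, with the last vector using real $\pm b$), with $a=\sqrt{d/(2(d-1)^2)}$ and $b=\sqrt{(d-2)/(4(d-1))}$ tuned so that $\braket{u_i}{\alpha}=\tfrac{1}{\sqrt2}$, and then verifies independence directly by tracking which $t_i$ contribute imaginary multiples of each $\ket{\phi_j}$. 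Either way, the explicit imaginary coefficients (or an equivalent span argument) are indispensable; as written, your proof does not establish that the independence locus is nonempty.
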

\begin{proof}
Let $a=\sqrt{\frac{d}{2(d-1)^2}}$ and $b=\sqrt{\frac{d-2}{4(d-1)}}$. \ Then we define $u_1,...u_{d-2}$ as follows:
\ban u_i &=& \left(\sum_{j \neq i}a\ket{\phi_j}\right) + ib\ket{\phi_{i+1}} - ib\ket{\phi_{i+2}}\,, \ean
For the last two vectors, we set
\ban u_{d-1} &=& \left(\sum_{j \neq d-1}a\ket{\phi_j}\right) + ib\ket{\phi_d} - ib\ket{\phi_1}\,, \\
 u_d &=& \left( \sum_{j \neq d} a \ket{\phi_j}\right) + b \ket{\phi_1} - b \ket{\phi_2}\,. \ean
Note that the coefficients in $u_d$ are all real, unlike for the other $d-1$ vectors. \ It is straightforward to verify that $\braket{u_i}{\phi_i} =0$ and $\braket{u_i}{\alpha}=\frac{1}{\sqrt{2}}$ for all $i$. \ Furthermore, we can compute the tangent vectors $t_i$ as follows. \ The geodesics from $\ket{u_i}$ to $\ket{\alpha}$ in the Fubini-Study metric can be parameterized by
\[\gamma(t) = \cos(t)\ket{v_i} + \sin(t)\ket{\alpha}\]
where $v_i$ is the normalized component of $u_i$ orthogonal to $\alpha$, that is $v_i=k\left( u_i - \braket{\alpha}{u_i} \alpha\right)$ for some real normalization constant $k$. These geodesics lie entirely within our choice of gauge. \ Therefore $t_i$ is the projection of $\gamma'(t)|_{t=\pi/2}$ onto the plane orthogonal to $\alpha$, which is
\ban t_i = u_i - \braket{\alpha}{u_i} \alpha. \ean

Since $t_i$ is in the tangent space, its normalization is irrelevant. \ Also, since our gauge is fixed, there is no ambiguity involving the global phase of $u_i$ or $t_i$.

We now verify that the $t_i$'s are linearly independent. \ Suppose that $c_1 t_1 + \cdots + c_d t_d = 0$, with $c_i$ real. \ Note that $\braket{\alpha}{u_i} \alpha$ has all real coefficients, so a coefficient of $t_i$ is imaginary if and only if the corresponding coefficient of $u_i$ is imaginary. \ Since $c_1 t_1 + \cdots + c_d t_d = 0$, in particular the imaginary terms in $\ket{\phi_i}$ must sum to zero for all $i$. \ For $i=3\ldots d$, only the terms $c_{i-2}t_{i-2}$ and $c_{i-1} t_{i-1}$ contain imaginary multiples of $\ket{\phi_i}$. \ Hence this constraint implies $c_{i-2}=c_{i-1}$. \ Additionally, $c_1t_1$ is the only term containing an imaginary multiple of $\ket{\phi_2}$, so we must have $c_1=0$. \ Therefore $c_1=c_2=\cdots=c_{d-2}=0$. \ Since $c_{d-1}t_{d-1}$ is the only term containing an imaginary multiple of $\ket{\phi_1}$, we must have $c_{d-1}=0$, and hence $c_d=0$ as well. \ Therefore the $t_i$'s are linearly independent.
\end{proof}

Note that in a real Hilbert space, the analogous statement to Lemma \ref{ui_lemma} is false because the dimension of the tangent space at $\alpha$ is only $d-1$. \ In a complex Hilbert space the dimension of the tangent space is $2d-2$, so the tangent space can contain $d$ linearly independent vectors assuming $d\geq 2$.

We now show that Lemma \ref{ui_lemma} implies the existence of a set $B$ of positive measure, on which every $\mu_{u_i}$ is ``equivalent" to the Lebesgue measure, in the sense that if $S\subseteq B$ has positive Lebesgue measure, then $S$ has positive measure under each $\mu_{u_i}$.

\begin{lem}
\label{lem_product_measure}
Let $u_i$ and $\ket{\alpha}$ be as defined in Lemma \ref{ui_lemma}. \ Then there exists a set $B$ in the neighborhood of $\ket{\alpha}$, of positive Lebesgue measure, such that the $\mu_{u_i}$ are equivalent to the Lebesgue measure on $B$.
\end{lem}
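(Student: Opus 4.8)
The plan is to build a local coordinate chart near $\ket{\alpha}$ in which the $d$ functions $F_i(\lambda):=|\braket{u_i}{\lambda}|^2$ serve as $d$ of the $2d-2$ real coordinates. In such a chart the symmetry hypothesis will force each $\mu_{u_i}$ to factor as (a one‑dimensional density in the $i$‑th coordinate) $\times$ (a smooth, strictly positive weight in all the other coordinates), and then $B$ can be taken to be the $\Phi$‑preimage of a box whose $i$‑th side is the positive‑density set of that one‑dimensional density. The point of Lemma~\ref{ui_lemma} is precisely that this is possible: the $d$ (potentially very irregular) sets get placed on $d$ \emph{independent} coordinate axes.

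First I would set up the chart. Since $\braket{u_i}{\alpha}=1/\sqrt2$ we have $F_i(\alpha)=1/2\in(0,1)$, so $\ket{\alpha}$ is a regular point of the smooth function $F_i$ on $\mathbb{CP}^{d-1}$, and a short computation in the gauge of Lemma~\ref{ui_lemma} (equivalently: radial geodesics from $\ket{u_i}$ are orthogonal to the level spheres of $F_i$) shows that $\nabla F_i|_{\alpha}$ is a nonzero real multiple of the tangent vector $t_i$. Completing $t_1,\dots,t_d$ to a real basis $t_1,\dots,t_d,s_1,\dots,s_{d-2}$ of $T_\alpha\mathbb{CP}^{d-1}$ (possible since the $t_i$ are linearly independent by Lemma~\ref{ui_lemma} and $d\le 2d-2$), I form the map $\Phi=(F_1,\dots,F_d,y_1,\dots,y_{d-2})$, where $y_1,\dots,y_{d-2}$ are any smooth coordinates with coordinate vector fields equal to $s_1,\dots,s_{d-2}$ at $\ket{\alpha}$. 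Its Jacobian at $\ket{\alpha}$ is block‑triangular, with determinant a nonzero multiple of $\det[\,g_{FS}(t_i,t_j)\,]$, hence nonzero; so $\Phi$ restricts to a diffeomorphism of a neighborhood $U$ of $\ket{\alpha}$ onto a neighborhood of $(\tfrac12,\dots,\tfrac12,0,\dots,0)$.

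Next I would read off the product structure and choose $B$. By the assumption in force after Lemma~\ref{measure_pos}, each $\mu_{u_i}$ restricted to $U$ has density $f_{u_i}(F_i(\lambda))$ with respect to the Fubini–Study volume, with $f_{u_i}$ an honest nonnegative $L^1$ function on $[0,1]$; transporting through $\Phi$ and absorbing the metric factor $\sqrt g$ (smooth and positive on $U$), $\mu_{u_i}$ has Lebesgue density $f_{u_i}(\xi_i)\,\tilde w_i(\xi)$ in the coordinates $\xi=(\xi_1,\dots,\xi_d,\eta_1,\dots,\eta_{d-2})$, with $\tilde w_i>0$ smooth. Let $P_i:=\{a\in[0,1]:f_{u_i}(a)>0\}$. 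Since $\braket{u_i}{\alpha}=1/\sqrt2$ gives $\|u_i-\alpha\|=\tfrac12=r_{u_i}$ (Lemma~\ref{rhalf}), Corollary~\ref{maxnontriviallem} says $\mu_{u_i}$ charges $\{\tfrac12\le F_i\le\tfrac12+\delta\}$ for every $\delta>0$; as $F_i$ is a submersion near $\ket{\alpha}$ this forces $\mathrm{Leb}\bigl(P_i\cap(\tfrac12,\tfrac12+\delta)\bigr)>0$ for all $\delta>0$. Pick $\varepsilon>0$ small enough that the box $\prod_{i=1}^d(\tfrac12,\tfrac12+\varepsilon)\times(-\varepsilon,\varepsilon)^{d-2}$ lies in $\Phi(U)$ and each $\tilde w_i>0$ on it, and set
\[
B:=\Phi^{-1}\Bigl(\textstyle\prod_{i=1}^{d}\bigl(P_i\cap(\tfrac12,\tfrac12+\varepsilon)\bigr)\times(-\varepsilon,\varepsilon)^{d-2}\Bigr).
\]
Then $\mathrm{Leb}(B)>0$, because in coordinates it is a product set whose factors all have positive measure and $\Phi^{-1}$ has nonvanishing Jacobian. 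For any $S\subseteq B$ with $\mathrm{Leb}(S)>0$, Fubini in the $\xi$‑coordinates gives $\mu_{u_i}(S)=\int_{\Phi(S)}f_{u_i}(\xi_i)\tilde w_i(\xi)\,d\xi>0$: the $\xi_i$‑slices of $\Phi(S)$ have positive $(2d-3)$‑measure for a positive‑measure set of values $\xi_i$, those $\xi_i$ lie in $P_i$ so $f_{u_i}(\xi_i)>0$, and $\tilde w_i>0$. The converse implication ($\mu_{u_i}(S)>0\Rightarrow\mathrm{Leb}(S)>0$) is immediate from $\mu_{u_i}\ll\mathrm{Leb}$, so the $\mu_{u_i}$ are equivalent to Lebesgue measure on $B$.

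The main obstacle is exactly the one highlighted by the fat‑Cantor‑set example: the positive‑density sets $P_i$ can be nowhere dense and contain no interval, so one cannot simply take $B$ to be a small ball or an intrinsically defined region. What rescues the argument is the linear independence of the $t_i$ from Lemma~\ref{ui_lemma}, which lets the $d$ radial functions $F_1,\dots,F_d$ be used \emph{simultaneously} as $d$ independent coordinates; then $\mathrm{Leb}\bigl(\prod_i(P_i\cap(\tfrac12,\tfrac12+\varepsilon))\bigr)=\prod_i\mathrm{Leb}\bigl(P_i\cap(\tfrac12,\tfrac12+\varepsilon)\bigr)>0$ automatically, whereas no single axis could carry more than one of these sets without risking measure zero. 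The remaining work is routine bookkeeping: verifying that $F_i(\alpha)=\tfrac12\in(0,1)$ makes the metric and Jacobian factors smooth and positive near $\ket{\alpha}$, that the $\Phi$‑image really contains a box around $\Phi(\alpha)$, and that the a.e.-ambiguity in $f_{u_i}$ is harmless because everything is phrased through integrals over positive‑measure slices.
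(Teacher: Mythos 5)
Your proposal is correct and follows essentially the same route as the paper: the paper's one-line proof takes $B=N_\varepsilon(\ket{\alpha})\cap\bigcap_i\supp(\mu_{u_i})$ and appeals to ``the same techniques as in the proof of Lemma \ref{rsupp},'' which is exactly the argument you spell out --- use the linear independence of the $t_i$ from Lemma \ref{ui_lemma} to make the $d$ radial functions $|\braket{u_i}{\lambda}|^2$ independent local coordinates, so that the intersection becomes a product of positive-measure one-dimensional sets. Your write-up (explicit chart, Fubini, and the use of Corollary \ref{maxnontriviallem} to get mass just outside radius $\tfrac12$) is in fact more careful than the paper's sketch, and no step is missing.
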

\begin{proof}
Consider
\[
B= N_{\varepsilon}(\ket{\alpha}) \cap  \supp(\mu_{u_1}) \cap \supp(\mu_{u_2}) \cap \ldots \cap \supp(\mu_{u_d})
\]
where $N_{\varepsilon}(\ket{\alpha})$ denotes the $\varepsilon$-neighborhood of $\alpha$. \ For sufficiently small $\varepsilon$, one can show that $B$ has the desired properties using the same techniques as in the proof of Lemma \ref{rsupp}.
\end{proof}

From these two lemmas, the proof of our main theorem follows, since the orthogonality of each $u_i$ to $\phi_i$ (together with the Born rule) implies that the set $B$ cannot give any outcome with positive probability under measurement. \ But each element in $B$ must give \textit{some} outcome under measurement.

\begin{thm}
\label{main_thm}
There exists no symmetric, maximally nontrivial $\psi$-epistemic theory in
dimension $d\geq 3$.
\end{thm}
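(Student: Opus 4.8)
The plan is to argue by contradiction, assembling the machinery already in place. Suppose a symmetric, maximally nontrivial $\psi$-epistemic theory exists in some dimension $d\geq 3$. By Lemma~\ref{measure_pos} we may replace each $\mu_\psi$ by its absolutely continuous part without affecting any overlap, so every $\mu_\psi$ now has an honest density $f_\psi(|\langle\psi|\lambda\rangle|^2)$. Fix a measurement basis $M=\{\phi_1,\ldots,\phi_d\}$ and set $\ket{\alpha}=\frac{1}{\sqrt d}(\ket{\phi_1}+\cdots+\ket{\phi_d})$. Since $|\langle\phi_i|\alpha\rangle|^2=1/d<\tfrac12$ for all $i$ (here we use $d\geq 3$), $\alpha$ lies in the interior of the deficiency region $R_M$ of Corollary~\ref{deficient}, so $N_\varepsilon(\alpha)\subseteq R_M$ for all sufficiently small $\varepsilon>0$.

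First I would feed $M$ and $\alpha$ into Lemma~\ref{ui_lemma} to obtain states $u_1,\ldots,u_d$ with $\langle u_i|\phi_i\rangle=0$ and $\langle u_i|\alpha\rangle=\tfrac{1}{\sqrt{2}}$; the latter gives $||u_i-\alpha||=\tfrac{2}{\pi}\arccos(\tfrac{1}{\sqrt{2}})=\tfrac{1}{2}=r_{u_i}$ by Lemma~\ref{rhalf}, so $\alpha$ sits exactly on the outer ring of each $\mu_{u_i}$. The lemma also gives that the geodesic tangent vectors $t_1,\ldots,t_d$ at $\alpha$ are linearly independent in the real, $(2d-2)$-dimensional tangent space of $\mathbb{CP}^{d-1}$. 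Next I would invoke Lemma~\ref{lem_product_measure} to produce a set $B=N_\varepsilon(\alpha)\cap\bigcap_{i=1}^d\supp(\mu_{u_i})$ --- shrinking $\varepsilon$ so that $B\subseteq R_M$ --- which has positive Lebesgue measure and on which every $\mu_{u_i}$ is equivalent to Lebesgue measure. Morally, near $\alpha$ each $\supp(\mu_{u_i})$ is (up to measure zero, by Corollary~\ref{maxnontriviallem}) the half-region $\{|\langle u_i|\lambda\rangle|^2\geq\tfrac12\}$ bounded by the hypersurface through $\alpha$ with normal $t_i$, and since the $t_i$ are independent these $d$ half-regions intersect in a genuine wedge of positive measure.

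The contradiction then comes from the covering condition~(\ref{cover}). Since $B$ has positive Lebesgue measure, integrating $\sum_{i=1}^d\xi_{i,M}\equiv 1$ over $B$ (exactly as in Lemma~\ref{response}) gives some outcome $j$ with $\int_B\xi_{j,M}(\lambda)\,d\lambda>0$, so $B\cap\nz(\xi_{j,M})$ has positive Lebesgue measure. But $B\subseteq\supp(\mu_{u_j})$, hence $\supp(\mu_{u_j})\cap\nz(\xi_{j,M})\supseteq B\cap\nz(\xi_{j,M})$ also has positive measure; since $\langle u_j|\phi_j\rangle=0$ and $M$ contains $\phi_j$, this contradicts Lemma~\ref{orthogonal}. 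Hence no such theory exists.

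I expect essentially all the difficulty to be absorbed into Lemmas~\ref{ui_lemma} and~\ref{lem_product_measure} rather than the assembly above. The real obstacle, flagged in the fat-Cantor-set discussion, is that the outer density of a state orthogonal to a fixed $\phi_j$ could in principle weave entirely around any single deficient set, so the averaging trick from Theorem~\ref{strong-symmetry-thm} no longer applies. Lemma~\ref{ui_lemma} sidesteps this by pinning down, for all $j$ simultaneously, a specific orthogonal state $u_j$ whose ring passes through the one common point $\alpha$; the linear-independence clause is what guarantees the $d$ supports still meet in a fat set, and it is here that complex rather than real Hilbert space is essential, since one needs $d$ independent tangent directions and $\dim_{\mathbb R}T_\alpha\mathbb{CP}^{d-1}=2d-2\geq d$.
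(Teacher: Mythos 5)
Your proposal is correct and follows essentially the same route as the paper: invoke Lemma~\ref{ui_lemma} and Lemma~\ref{lem_product_measure} to obtain the positive-measure set $B$ near $\ket{\alpha}$ on which every $\mu_{u_i}$ is equivalent to Lebesgue measure, then play the covering condition~(\ref{cover}) off against the Born-rule vanishing forced by $\braket{u_i}{\phi_i}=0$. The only cosmetic difference is that you phrase the endgame contrapositively (some $j$ has $\int_B\xi_{j,M}>0$, contradicting Lemma~\ref{orthogonal}) whereas the paper shows $\int_B\xi_{i,M}=0$ for every $i$ and contradicts $\sum_i\xi_{i,M}\equiv 1$; these are the same argument.
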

\begin{proof}
By Lemmas \ref{ui_lemma} and \ref{lem_product_measure}, there is a measurement basis $M=\{\phi_1,\ldots,\phi_d\}$ and vectors $u_1, \ldots, u_d$ such that each $u_i$ is orthogonal to $\phi_i$.

Furthermore, there is a set $B$ of positive measure such that each $\mu_{u_i}$ is equivalent to the Lebesgue measure on $B$. \ Therefore by the Born rule, for each $i$ we must have
\[
\int_B \mu_{u_i}(\lambda)\xi_{i,M}(\lambda) \mathrm{d}\lambda =0.
\]
Since each $\mu_{u_i}$ is equivalent to the Lebesgue measure on $B$, this implies
\[
\int_B \xi_{i,M}(\lambda) \mathrm{d}\lambda =0.
\]
But also, since $\sum_i \xi_{i,M}(\lambda) = 1$ for each state $\lambda$, we have that
\[
\displaystyle\sum_i \int_B \xi_{i,M}(\lambda) \mathrm{d}\lambda =  \int_B \mathrm{d}\lambda>0
\]
which is a contradiction. \qedhere

\end{proof}

\subsection{Extending the Proof to $\Lambda=U(d)$}

We now rule out a generalization of strongly symmetric, maximally nontrivial theories with a larger ontic space, namely $\Lambda=U(d)$.

Recall that a theory is strongly symmetric if $\Lambda=\mathbb{CP}^{d-1}$ and $\mu_{U\psi}(U\lambda)=\mu_{\psi}(\lambda)$ for all unitaries $U$. \ A theory is (weakly) symmetric if $\mu_{U\psi}(U\lambda)=\mu_{\psi}(\lambda)$ only for those $U$ such that $U\psi=\psi$.

We can generalize the definition of strong symmetry to any ontic space $\Lambda$ on which the unitary group has an action. \ We define a theory to be strongly symmetric with ontic space $\Lambda$ and action $a: U(d) \times \Lambda \rightarrow \Lambda$ if $\mu_{U\psi}(a(U,\lambda))=\mu_{\psi}(\lambda)$ for all unitaries $U$. \ In other words, for all $U$ and $\psi$, the following two distributions on $M$ are identical: draw $M$ from $\mu_{U\psi}$, or draw $N$ from $\mu_\psi$ and then set $M=a(U,N)$. \ Strongly symmetric theories admit a natural dynamics, since applying a unitary to a quantum state is equivalent to applying the unitary to the ontic states via the action $a$.

A natural choice of $\Lambda$ in this context is $U(d)$, the symmetry group of the $d$-dimensional Hilbert space. \ The unitary group $U(d)$ has a natural action on itself by left multiplication. \ We now show that there are no strongly symmetric, maximally nontrivial theories with $\Lambda=U(d)$ and with the action $a(U, \lambda)=U\lambda$ given by left multiplication.

Our proof will proceed similarly to the proof above. \ Assume there exists a strongly symmetric, maximally nontrivial theory with $\Lambda=U(d)$. \ We will find a set of quantum states $\{\Psi_1,\Psi_2,\Psi_3\}$, such that the $\mu_{\Psi_j}$ have nontrivial joint overlap, by which we mean that $\supp(\Psi_1)\cap \supp(\Psi_2) \cap \supp(\Psi_3) $ has positive measure. \ We will then create an orthonormal basis $\{e_1,\ldots,e_d\}$ such that for all $i$, there exists a $j$ such that $\braket{e_i}{\Psi_j}$=0. \ Therefore, if measured in the basis $\{e_i\}$, the ontic states in $S$ cannot give output $e_i$ for any $i$ by the Born rule, which contradicts the fact that they must give some outcome under measurement.

To show this, we first need to characterize strongly symmetric theories with $\Lambda=U(d)$. \ We now show that the probability distributions $\mu_\psi$ of any strongly symmetric theory are fully characterized by some probability measure $\nu$ on $\mathbb{CP}^{d-1}$.

\begin{lem} Let $\pi=(\Lambda,\mu,\xi)$ be a strongly symmetric theory with $\Lambda=U(d)$ and with action $a(U, \lambda)=U\lambda$ given by left multiplication. \ Then there exists a probability measure $\nu$ on $\mathbb{CP}^{d-1}$ which fully characterizes the probability distributions $\mu_\psi$ for all $\psi$. \ In particular, to draw a sample $M$ from $\mu_\psi$, one can first draw $\lambda \in \mathbb{CP}^{d-1}$ from $\nu$, and then draw $M$ uniformly (according to the Haar measure) such that $M^\dagger \psi = \lambda$.
\label{lem:existence-nu}
\end{lem}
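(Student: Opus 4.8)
The plan is to reduce the entire family $\{\mu_\psi\}$ to a single left-invariant measure on $U(d)$ by fixing a reference state and invoking strong symmetry, and then to show that this measure is pinned down by a measure $\nu$ on $\mathbb{CP}^{d-1}$ together with ``Haar measure on the fibers.'' First I would fix a reference quantum state $\psi_0$ and let $H\le U(d)$ be the stabilizer of the \emph{ray} $[\psi_0]$, i.e.\ $H=\{U\in U(d):U\psi_0=\psi_0 \text{ as quantum states}\}\cong U(1)\times U(d-1)$. Applying the strong-symmetry condition (in the form ``pushing $\mu_\psi$ forward by left multiplication by $U$ yields $\mu_{U\psi}$'') with $\psi=\psi_0$ and $U\in H$ gives that $\mu_{\psi_0}$ is invariant under left multiplication by $H$. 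For a general $\psi$, pick any $V\in U(d)$ with $V\psi_0=\psi$; strong symmetry with $U=V$ shows that a sample from $\mu_\psi$ is exactly $V$ times a sample from $\mu_{\psi_0}$. I would check this description is independent of $V$: if $V'\psi_0=\psi$ as well, then $V'^{-1}V\in H$, and the left-$H$-invariance of $\mu_{\psi_0}$ makes the two descriptions agree. Thus the whole theory is encoded by the single left-$H$-invariant probability measure $\mu_{\psi_0}$ on $U(d)$.

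Next I would identify the quotient. The map $\rho:U(d)\to\mathbb{CP}^{d-1}$, $\rho(\mu)=[\mu^\dagger\psi_0]$, is constant on the right cosets $H\mu$ (since $h^\dagger\psi_0=\psi_0$ as a ray for $h\in H$) and descends to a bijection $H\backslash U(d)\xrightarrow{\ \sim\ }\mathbb{CP}^{d-1}$ by transitivity of the $U(d)$-action on $\mathbb{CP}^{d-1}$; its fibers $\rho^{-1}(\lambda)=\{M\in U(d):M^\dagger\psi_0=\lambda\}$ are exactly the right cosets of $H$. Set $\nu=\rho_*\mu_{\psi_0}$, a probability measure on $\mathbb{CP}^{d-1}$. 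To see that $\nu$ recovers $\mu_{\psi_0}$, let $\eta$ be normalized Haar measure on the compact group $H$ and, for a nonnegative measurable $f$ on $U(d)$, put $(Tf)(\mu)=\int_H f(h\mu)\,d\eta(h)$. Left-$H$-invariance of $\mu_{\psi_0}$ together with Fubini gives $\int f\,d\mu_{\psi_0}=\int Tf\,d\mu_{\psi_0}$, while left-invariance of $\eta$ shows $Tf$ is constant on each coset $H\mu$, so $Tf=\tilde f\circ\rho$ for a measurable $\tilde f$ on $\mathbb{CP}^{d-1}$; hence $\int f\,d\mu_{\psi_0}=\int_{\mathbb{CP}^{d-1}}\tilde f\,d\nu$, where $\tilde f(\lambda)$ is precisely the average of $f$ over the fiber $\rho^{-1}(\lambda)=\{M:M^\dagger\psi_0=\lambda\}$ with respect to the push-forward of $\eta$, i.e.\ the Haar-uniform measure on that fiber. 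This is the stated sampling recipe for $\psi=\psi_0$. For general $\psi$, using that a sample from $\mu_\psi$ is $VN$ with $N\sim\mu_{\psi_0}$ and that $(VN)^\dagger\psi=N^\dagger\psi_0$, the recipe transports to $\{M:M^\dagger\psi=\lambda\}$, as claimed. (Alternatively one could invoke the disintegration theorem plus uniqueness of Haar measure on $H$, but the explicit averaging above sidesteps that machinery.)

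The inner-product and coset bookkeeping is routine; I expect the points that must be handled with care are (i) using the stabilizer of the \emph{ray} $[\psi_0]$ rather than of the vector, so that the quotient is $\mathbb{CP}^{d-1}$ and not the sphere $S^{2d-1}$, and (ii) the compatibility of directions: $\rho$ is constant on \emph{right} cosets $H\mu$ while $\mu_{\psi_0}$ is invariant under \emph{left} multiplication, and it is exactly this pairing that makes the averaging argument close. The measurability of $\tilde f$ and the fact that $\nu$ is a well-defined Borel probability measure follow from standard properties of the continuous quotient map of compact groups.
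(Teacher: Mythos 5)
Your proof is correct, and its logical skeleton matches the paper's: both arguments rest on (a) invariance of $\mu_\psi$ under the stabilizer of the ray $[\psi]$ forcing the measure on each fiber $\{M: M^\dagger\psi=\lambda\}$ to be Haar-uniform, and (b) transitivity of the $U(d)$-action forcing the induced base measure $\nu$ to be independent of $\psi$. The execution differs in two ways worth noting. First, the paper works with an arbitrary $\psi$, defines the conditional measure $\mu_{\psi,\lambda}$ ``obtained by conditioning on $M^\dagger\psi=\lambda$,'' and asserts that invariance under the stabilizer makes it Haar; since $\{M:M^\dagger\psi=\lambda\}$ has measure zero, this implicitly appeals to a disintegration of $\mu_\psi$ over the quotient. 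You sidestep that entirely with the averaging operator $(Tf)(M)=\int_H f(hM)\,d\eta(h)$ and Fubini, which is a cleaner and more rigorous route to the same conclusion (your use of bi-invariance of Haar measure on the compact group $H$ to see that $Tf$ descends to the right-coset space is exactly the point that makes it work). Second, you organize everything around a single reference state $\psi_0$ and transport by a unitary $V$ with $V\psi_0=\psi$, checking independence of $V$ via left-$H$-invariance, whereas the paper first derives the sampling recipe for each $\psi$ separately and then shows $\nu_\psi=\nu_\phi$ by comparing the pushforwards of $(UN)^\dagger\phi$ and $N^\dagger\psi$. These are equivalent, and your attention to using the stabilizer of the ray (so the quotient is $\mathbb{CP}^{d-1}$ rather than $S^{2d-1}$) is a detail the paper leaves implicit.
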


\begin{proof}
Suppose we draw $M$ from $\mu_\psi$. \ Let $\nu_\psi$ be the distribution on $\mathbb{CP}^{d-1}$ induced by $M^\dagger \psi$.

Suppose that $U\psi=\psi$ for some unitary $U$. \ Then by symmetry, $\mu_\psi$ must be invariant under applying $U$. \ Note that if $M^\dagger \psi= \lambda$, then $(UM)^\dagger \psi = M^\dagger U^\dagger \psi = M^\dagger \psi = \lambda$ as well.

Let $\mu_{\psi, \lambda}$ be the measure over $M$'s obtained by starting from $\mu_\psi$ and then conditioning on $M^\dagger \psi= \lambda$. \ By the above observation, if $\mu_\psi$ is invariant under every such $U$, then $\mu_{\psi, \lambda}$ must also be invariant under $U$ for every $\lambda$. \ In particular this implies that $\mu_{\psi, \lambda}$ must be the uniform (Haar) measure on matrices $M$ such that $M^\dagger \psi = \lambda$. \ Therefore, to draw a sample $M$ from $\mu_\psi$, one can first draw $\lambda$ from $\nu_\psi$, and then draw $M$ uniformly (according to the Haar measure) such that $M^\dagger \psi = \lambda$.

Now suppose that $U\psi=\phi$. \ Let $M$ be drawn from $\mu_\phi$ and $N$ be drawn from $\mu_\psi$. By strong symmetry, the distribution of $UN$ is the same as the distribution of $M$. \ But we also know that $(UN)^\dagger \phi = N^\dagger U^\dagger \phi = N^\dagger \psi$. \ Hence the induced distribution of $(UN)^\dagger \phi$ is the same as the induced distribution of $N^\dagger \psi$. But by strong symmetry the former distribution is $\nu_\phi$, and the latter distribution is $\nu_\psi$. Hence $\nu_\psi=\nu_\phi = \nu$ for all $\phi$ and $\psi$ as desired.

\end{proof}

By the Lebesgue decomposition theorem, $\nu$ can be uniquely decomposed as $\nu = \nu_S + \nu_C$, where $\nu_C$ is absolutely continuous with respect to the Lebesgue measure on $\mathbb{CP}^{d-1}$, and $\nu_S$ is singular with respect to the Lebesgue measure on $\mathbb{CP}^{d-1}$. \ As in the previous section, we now show that maximum nontriviality implies that $\nu_C$ has positive total measure, and we restrict our attention to $\nu_C$ in future parts.

\begin{lem}
\label{lem:nu-C-nontrivial}
In any maximally nontrivial, strongly symmetric theory with $\Lambda=U(d)$ and action $a$ given by left multiplication, $\nu_C$ has positive total measure.

\end{lem}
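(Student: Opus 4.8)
The plan is to mirror the proof of Lemma~\ref{measure_pos}, using the ``coordinatization'' of $\mu_\psi$ supplied by Lemma~\ref{lem:existence-nu}. Suppose for contradiction that $\nu_C$ has total measure zero, so that $\nu=\nu_S$ is singular with respect to the uniform measure $\ell$ on $\mathbb{CP}^{d-1}$, and fix a Borel set $N\subseteq\mathbb{CP}^{d-1}$ with $\ell(N)=0$ on which $\nu$ is concentrated. By Lemma~\ref{lem:existence-nu}, for every state $\ket{\phi}$ the measure $\mu_\phi$ is then concentrated on $F_\phi:=\{M\in U(d):M^\dagger\phi\in N\}$. The goal is to exhibit two distinct, non-orthogonal states $\ket{\psi},\ket{\phi}$ with $\mu_\psi(F_\phi)=0$; since $\mu_\phi$ lives on $F_\phi$ while $\mu_\psi$ then lives on its complement, this forces $\mu_\psi$ and $\mu_\phi$ to be mutually singular, i.e.\ of total variation distance $1$, contradicting maximal nontriviality.

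First I would record the relevant geometry. Fix distinct non-orthogonal $\ket{\psi},\ket{\phi}$ and set $c:=|\braket{\psi}{\phi}|\in(0,1)$. By Lemma~\ref{lem:existence-nu}, to draw $M\sim\mu_\psi$ one first draws $\lambda\sim\nu$ and then draws $M$ Haar-uniformly from the fiber $\{M:M^\dagger\psi=\lambda\}$. This fiber is a torsor for the stabilizer $\mathrm{Stab}([\lambda])\cong U(1)\times U(d-1)$ of the ray $[\lambda]$, acting by $M\mapsto MK$, and under this action $M^\dagger\phi$ sweeps out the orbit of a fixed vector $\eta_0$ with $|\braket{\lambda}{\eta_0}|=c$ (the value $c$ being forced by $|\braket{M^\dagger\psi}{M^\dagger\phi}|=|\braket{\psi}{\phi}|$). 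That orbit is precisely the $(2d-3)$-sphere
\[
\Sigma_\lambda^{(c)}=\bigl\{\eta\in\mathbb{CP}^{d-1}:|\braket{\lambda}{\eta}|=c\bigr\},
\]
on which $\mathrm{Stab}([\lambda])$ acts transitively, and the law of $M^\dagger\phi$ is its unique invariant probability measure $\sigma_\lambda^{(c)}$. Conditioning on $\lambda$ and integrating therefore gives $\mu_\psi(F_\phi)=\int_{\mathbb{CP}^{d-1}}\sigma_\lambda^{(c)}(N)\,d\nu(\lambda)$.

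The heart of the argument is to choose $c$ so that this last integral vanishes. For a fixed $\ket{\lambda}$, the spheres $\{\Sigma_\lambda^{(c)}\}_{0<c<1}$ foliate $\mathbb{CP}^{d-1}$ up to a null set, and by homogeneity the uniform measure disintegrates as $\ell=\int_0^1\sigma_\lambda^{(c)}\,d\rho_\lambda(c)$, where $\rho_\lambda$ is the law of $|\braket{\lambda}{\eta}|$ for $\eta\sim\ell$; a short computation shows $\rho_\lambda$ has a strictly positive density on $(0,1)$, hence is equivalent to Lebesgue measure there. Since $\ell(N)=0$ we get $\int_0^1\sigma_\lambda^{(c)}(N)\,d\rho_\lambda(c)=0$, so for \emph{every} $\ket{\lambda}$ the set $B_\lambda:=\{c\in(0,1):\sigma_\lambda^{(c)}(N)>0\}$ has Lebesgue measure zero. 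Integrating $\mathbf{1}[\sigma_\lambda^{(c)}(N)>0]$ against $d\nu(\lambda)\,dc$ and applying Tonelli then yields $\int_0^1\nu\bigl(\{\lambda:\sigma_\lambda^{(c)}(N)>0\}\bigr)\,dc=\int\ell(B_\lambda)\,d\nu(\lambda)=0$, so for Lebesgue-a.e.\ $c\in(0,1)$ we have $\sigma_\lambda^{(c)}(N)=0$ for $\nu$-a.e.\ $\lambda$. Fixing such a $c$ and any distinct non-orthogonal $\ket{\psi},\ket{\phi}$ with $|\braket{\psi}{\phi}|=c$ gives $\mu_\psi(F_\phi)=0$, which completes the contradiction.

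The step I expect to be the main obstacle is exactly the one just sketched, and the reason a naive version fails is the Kakeya/Besicovitch-type phenomenon flagged elsewhere in the paper: a Lebesgue-null set $N$ could in principle contain a positive $\sigma_\lambda^{(c)}$-fraction of the \emph{fixed-radius} sphere $\Sigma_\lambda^{(c)}$ for a $\nu$-positive set of centers $\lambda$, so one cannot hope to make $\mu_\psi(F_\phi)$ vanish at a prescribed $c$. What rescues the argument is that maximal nontriviality need only fail for \emph{one} non-orthogonal pair, so we are free to average over the radius $c$: the disintegration of $\ell$ over the family $\{\Sigma_\lambda^{(c)}\}_c$ makes the bad set of radii Lebesgue-null for each individual center, and Tonelli promotes this to a single radius that is good for $\nu$-almost-every center. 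The remaining points are routine: the joint measurability of $(c,\lambda)\mapsto\sigma_\lambda^{(c)}(N)$ coming from the measurable dependence of the disintegration, and the standard fact that $\Sigma_\lambda^{(c)}$ is a smooth $(2d-3)$-sphere on which $\mathrm{Stab}([\lambda])$ acts transitively, so that $\sigma_\lambda^{(c)}$ is well defined.
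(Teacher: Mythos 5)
Your proof is correct, and its engine is the same as the paper's: both arguments exploit the disintegration of the uniform measure on $\mathbb{CP}^{d-1}$ over the spheres $\{\eta : |\braket{\lambda}{\eta}| = c\}$, with a radius law equivalent to Lebesgue measure on $(0,1)$, and both average over the radius $c$ via Fubini/Tonelli precisely because --- as you correctly flag --- nothing can be forced at a single prescribed radius (the Kakeya-type obstruction). The difference is the direction of the argument. The paper argues forward: nontrivial overlap at every inner product forces $S=\supp(\nu)$ to have positive Lebesgue measure, from which it concludes $\nu_C>0$. You argue the contrapositive: a hypothetically singular $\nu$ has a Lebesgue-null carrier $N$, and pushing $N$ through the sphere disintegration yields a radius $c$ at which $\sigma_\lambda^{(c)}(N)=0$ for $\nu$-a.e.\ $\lambda$, hence a non-orthogonal pair with $\mu_\psi(F_\phi)=0$ and trivial overlap. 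Your version is actually tighter at the final step: positivity of $\mathrm{Leb}(\supp(\nu))$ does not by itself imply $\nu_C>0$ (a singular measure can have support of full Lebesgue measure), whereas working with a null carrier set rather than the support delivers the stated conclusion about $\nu_C$ directly. The points you leave implicit --- joint measurability of $(c,\lambda)\mapsto\sigma_\lambda^{(c)}(N)$ from a measurable disintegration, and the explicit radius density $2c(d-1)(1-c^2)^{d-2}$ --- are routine.
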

\begin{proof}

By Lemma \ref{lem:existence-nu}, if $M$ is drawn from $\mu_\psi$, then $M^\dagger$ maps $\psi$ to a state $\lambda$ chosen from $\nu$, and maps $\phi$ uniformly at random to a state $\lambda'$ with inner product $|\braket{\psi}{\phi}|=|\braket{\lambda}{\lambda'}|$. \ Likewise, if $N$ is drawn from $\mu_\phi$, then $N^\dagger$ maps $\phi$ to a state $\lambda'$ chosen from $\nu$, and maps $\psi$ uniformly at random to a state $\lambda$ with inner product $|\braket{\psi}{\phi}|=|\braket{\lambda}{\lambda'}|$.

By maximum nontriviality, $\mu_\psi$ and $\mu_\phi$ have nontrivial overlap for any non-orthogonal $\psi$ and $\phi$. \ Let $S=\supp(\nu)$. \ Since $\mu_\psi$ and $\mu_\phi$ have nontrivial overlap for all non-orthogonal $\psi$ and $\phi$, we must have that for all $r\in(0,1]$, if $\lambda$ is chosen according to $\nu$ and $\lambda'$ is chosen uniformly such that $|\braket{\lambda}{\lambda'}|=r$, then $\lambda' \in S$ with positive probability. \ In particular there must exist some $\lambda \in S$ such that, if $r$ is chosen uniformly at random in $(0,1)$, and $\lambda'$ is chosen uniformly at random such that $|\braket{\lambda}{\lambda'}|=r$, then $\lambda' \in S$ with positive probability. \ This immediately implies that $S$ has positive Lebesgue measure, since the Lebesgue measure of $S$ can be expressed as the integral of the indicator for $\lambda' \in S$ over $\mathbb{CP}^{d-1}$, which can be written in polar coordinates centered about $\lambda$. \ The integral is positive by the preceding observation. \ Hence $S=\supp(\nu)$ has positive Lebesgue measure and $\nu_C$ has positive total measure.

\end{proof}

As before, we now assume that $\nu =\nu_{C}$, i.e.\ we will discard the singular part of $\nu$. \ We can do this without loss of generality since our contradiction will not rely on the normalization of $\nu$.

Next we will show that it is easy to find states $\{\Psi_1,\Psi_2,\Psi_3\}$ such that the $\mu_{\Psi_i}$ have nontrivial joint overlap, i.e.\ the intersection of their supports is a set of positive measure. \ The proof will make crucial use of the Lebesgue density theorem. \ The Lebesgue density theorem says that for any set $S$ of positive measure, for almost every point in $S$, the density of $S$ at that point is $1$. \ More formally, the density of $S$ at point $x$, denoted $d_x(S)$, is defined as
\[
d_x(S) = \lim_{\epsilon \rightarrow 0^+} \frac{\mu(B_{\epsilon}(x) \cap S)}{\mu(B_{\epsilon}(x))},
\]
where $\mu$ is the Lebesgue measure and $B_{\epsilon}(x)$ is the $\epsilon$-ball centered at $x$. \ The Lebesgue density theorem says that for any measurable set $S$, the set $T=\{x\in S : d_x(S)=1\}\subseteq S$ differs from $S$ by at most a set of measure zero. \ In particular $T$ has the same measure as $S$. \ The points $x\in S$ such that $d_x(S)=1$ are called the Lebesgue density points of $S$.

\begin{lem}
\label{lem:joint-overlap-properties}

For any strongly symmetric theory with $\Lambda=U(d)$, there exists a set $T\subseteq \mathbb{CP}^{d-1}$ such that:

\begin{enumerate}
\item $T$ has positive measure. \label{Tposmeas}
\item If $M^\dagger \psi \in T$ then $M$ is a Lebesgue density point of $\supp(\mu_\psi)$. \label{Tldp}
\item If $\mu_\psi$ and $\mu_\phi$ have nontrivial overlap, then there exists an $M$ such that $M^\dagger \psi \in T$ and $M^\dagger \phi \in T$. \label{Tintersection}
\item For any three states $\Psi_1,\Psi_2,\Psi_3$ such that there exists an $M$ with $M^\dagger \Psi_i \in T$ for each $i=1,2,3$, the intersection $\supp(\mu_{\Psi_1}) \cap \supp(\mu_{\Psi_2}) \cap \supp(\mu_{\Psi_3})$ has positive measure. \label{Tjointint}
\end{enumerate}

\end{lem}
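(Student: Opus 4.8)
The plan is to take $T$ to be the set of Lebesgue density points of $S:=\supp(\nu)$ (having reduced to $\nu=\nu_C$), and to exploit repeatedly the following consequence of Lemma~\ref{lem:existence-nu}: for each state $\psi$ the map $f_\psi\colon U(d)\to\mathbb{CP}^{d-1}$, $f_\psi(N)=N^\dagger\psi$, is a smooth submersion onto $\mathbb{CP}^{d-1}$ (it is the transitive $U(d)$-action on $\mathbb{CP}^{d-1}$ precomposed with inversion) whose fibers carry the Haar-conditional measure, so that $\supp(\mu_\psi)=f_\psi^{-1}(S)$ exactly. Two features of submersions will be used throughout: the preimage under $f_\psi$ of a Lebesgue-null set is null, and (since $f_\psi$ is locally a coordinate projection) Lebesgue density points pull back to Lebesgue density points. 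Accordingly I would set $T=\{x\in S:d_x(S)=1\}\subseteq S$.

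Properties~\ref{Tposmeas}--\ref{Tintersection} then fall out quickly. Property~\ref{Tposmeas}: by Lemma~\ref{lem:nu-C-nontrivial}, $S$ has positive Lebesgue measure, and the Lebesgue density theorem gives $|T|=|S|>0$. Property~\ref{Tldp}: if $M^\dagger\psi\in T$ then $M^\dagger\psi$ is a density point of $S$; in local coordinates near $M$ in which $f_\psi$ becomes a coordinate projection, $\supp(\mu_\psi)=f_\psi^{-1}(S)$ looks like $S\times\mathbb{R}^{d^2-2d+2}$, and a point of such a product is a density point precisely when its $S$-coordinate is a density point of $S$, so $M$ is a density point of $\supp(\mu_\psi)$. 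Property~\ref{Tintersection}: nontrivial overlap of $\mu_\psi$ and $\mu_\phi$ means $\{N:N^\dagger\psi\in S\}\cap\{N:N^\dagger\phi\in S\}$ has positive measure; since $S\setminus T$ is null, both $f_\psi^{-1}(S\setminus T)$ and $f_\phi^{-1}(S\setminus T)$ are null, so removing them leaves a positive-measure (hence nonempty) set of $N$ with $N^\dagger\psi,N^\dagger\phi\in T$; take $M=N$.

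Property~\ref{Tjointint} is the real content. Given $\Psi_1,\Psi_2,\Psi_3$ and $M$ with $M^\dagger\Psi_i\in T$, I would first use left-invariance of Haar measure on $U(d)$: left-multiplication by $M$ is a measure-preserving bijection carrying $\bigcap_i\supp(\mu_{M^\dagger\Psi_i})$ onto $\bigcap_i\supp(\mu_{\Psi_i})$, so one may assume $M=I$, i.e.\ $\lambda_i:=\Psi_i\in T$, and must show $\{N\in U(d):N^\dagger\lambda_i\in S,\ i=1,2,3\}$ has positive measure. One cannot simply apply the submersion principle to the joint map $N\mapsto(N^\dagger\lambda_1,N^\dagger\lambda_2,N^\dagger\lambda_3)$: its image lies in a single $U(d)$-orbit in $(\mathbb{CP}^{d-1})^3$ of positive codimension (for $d=3$, $\dim U(3)=9<12=\dim(\mathbb{CP}^{2})^{3}$), and $S\times S\times S$ may meet that orbit in a null set. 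Instead I would argue near $N=I$ one coordinate at a time and union-bound. For a small ball $B_r$ about $I$ in $U(d)$, each individual submersion $N\mapsto N^\dagger\lambda_i$ pushes $\mu_{\mathrm{Haar}}|_{B_r}$ to an absolutely continuous measure on $\mathbb{CP}^{d-1}$ supported in a ball $B_{Cr}(\lambda_i)$ with density $O(r^{d^2-2d+2})$ uniformly in $r$ (put the submersion in constant-rank normal form: it is then a projection of a ball, and the normalizing diffeomorphisms have bounded Jacobians near $I$ and near $\lambda_i$). Hence
\[
\mu_{\mathrm{Haar}}\big(\{N\in B_r:N^\dagger\lambda_i\notin S\}\big)\ \le\ O\!\left(r^{d^2-2d+2}\right)\cdot\big|B_{Cr}(\lambda_i)\setminus S\big|,
\]
and since $\lambda_i$ is a density point of $S$, $|B_{Cr}(\lambda_i)\setminus S|=o(r^{2d-2})$, so the right-hand side is $o(r^{d^2})=o(\mu_{\mathrm{Haar}}(B_r))$. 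Summing the three bad events and taking $r$ small makes their total measure strictly below $\mu_{\mathrm{Haar}}(B_r)$, leaving a positive-measure set of $N$ with $N^\dagger\lambda_i\in S$ for all $i$; this set sits inside $\bigcap_i\supp(\mu_{\Psi_i})$, which is therefore of positive measure.

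The main obstacle is exactly the non-submersivity of the joint map, which is what forces the density-point refinement: the Lebesgue density theorem is what upgrades the trivial estimate $|B_{Cr}(\lambda_i)\setminus S|=O(r^{2d-2})$ to $o(r^{2d-2})$, and it is this saving — available once per coordinate — that lets the union bound over the three indices go through. The one remaining technicality is the uniform-in-$r$ bound on the three pushforward densities, which I would establish via the constant-rank normal form as indicated above.
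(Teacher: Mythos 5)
Your proposal is correct and follows essentially the same route as the paper: both take $T$ to be a Lebesgue-density-point refinement (the paper defines it upstairs as the density points of $\supp(\mu_\psi)\subseteq U(d)$ and pushes down to $\mathbb{CP}^{d-1}$, while you define it downstairs on $\supp(\nu)$ and pull back through the submersion $N\mapsto N^\dagger\psi$, which amounts to the same set), and both establish property \ref{Tjointint} by a union bound over the three ``bad'' events in a small Haar ball around $M$, using the density-point property to make each bad event a vanishing fraction of the ball. Your version merely makes explicit the quantitative pushforward-density estimate that the paper leaves implicit, and correctly identifies the non-submersivity of the joint map as the reason the density refinement is needed.
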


\begin{proof}

Let $S=\supp(\nu)$ and let $S'_\psi = \supp(\mu_\psi)$. \ Note that $S$ and $S'_\psi$ have positive measure by Lemma \ref{lem:nu-C-nontrivial}. \ Let $T'_\psi$ be the set of Lebesgue density points of $S_\psi$. \ Let $T=\{M^\dagger \psi : M \in T'_\psi\}$.

First note that the definition of $T$ is independent of our choice of $\psi$. \ Indeed by strong symmetry, if $U\psi=\phi$ then $U T'_\psi = T'_\phi$, where the notation $U S$ denotes the set $\{Us : s\in S\}$. \ Therefore
\[
T=\{M^\dagger \psi : M \in T'_\psi\} = \{ (UM)^\dagger U \psi : M \in T'_\psi\} = \{M^\dagger \phi : M \in T'_\phi\}.
\]

Second, note that $T\subseteq S$ since $T'_\psi \subseteq S'_\psi$. \ Furthermore, since $T'_\psi$ differs from $S'_\psi$ only by a set of measure zero, $T$ differs from $S$ only by a set of measure zero as well. \ Hence $T$ has positive measure, which proves property \ref{Tposmeas}.

Also, $T$ precisely characterizes $T'_\psi$, in the sense that $M\in T'_\psi$ if and only if $M^\dagger \psi \in T$. \ The ``if" direction follows directly from the definition, while the ``only if" direction follows from the fact that once $M^\dagger \psi$ is fixed, the distribution of $M$ under $\mu_\psi$ is uniform over the remaining degrees of freedom by Lemma \ref{lem:existence-nu}. \ This proves property \ref{Tldp}.

Next suppose that $\mu_\psi$ and $\mu_\phi$ have nontrivial overlap, i.e.\ $\supp(\mu_\psi) \cap \supp(\mu_\phi)$ has positive measure. \ Then by Lemma \ref{lem:existence-nu}, the set $\{M: M^\dagger \psi \in S \text{ and } M^\dagger \phi \in S\}$ has positive measure as well. \ Since $T$ differs from $S$ only by a set of measure zero, this implies that $\{M: M^\dagger \psi \in T \text{ and } M^\dagger \phi \in T\}$ has positive measure also. \ Hence there exists an $M$ such that $M^\dagger \psi \in T$ and $M^\dagger \phi \in T$. \ This proves property \ref{Tintersection}.

Finally suppose that for three states $\Psi_1,\Psi_2,\Psi_3$, there exists an $M$ with $M^\dagger \Psi_i \in T$ for each $i=1,2,3$. \ By property \ref{Tldp} we know that $M$ is a Lebesgue density point of $\supp(\mu_{\Psi_i})$ for each $i$. \ Suppose that we perturb $M$ by a small amount $\epsilon >0$ uniformly at random to obtain a new matrix $N$. \ More formally, choose $N$ according to the Haar measure on $B_\epsilon (M)$. \ Since $M$ is a Lebesgue density point of $\supp(\mu_{\Psi_i})$, by choosing small enough $\epsilon$, the density of $\supp(\mu_{\Psi_i})$ in $B_\epsilon (M)$ can be made arbitrarily close to $1$. \ Therefore, by choosing $\epsilon$ small enough, we can ensure that for each $i$, the probability that $N^\dagger \Psi_i \in T$ is at least (say) $0.99$.

The events $N^\dagger \Psi_i \in T$ are not necessarily independent, but by the union bound the probability that $N^\dagger \Psi_i \in T$ for all $i=1,2,3$ is at least $0.97$. \ Hence $N$ will be in $\supp(\mu_{\Psi_1}) \cap \supp(\mu_{\Psi_2}) \cap \supp(\mu_{\Psi_3})$ with positive probability. \ This implies that $ \supp(\mu_{\Psi_1}) \cap \supp(\mu_{\Psi_2}) \cap \supp(\mu_{\Psi_3})$ has positive Lebesgue measure, and hence the $\mu_{\Psi_i}$ have nontrivial joint overlap, which proves property \ref{Tjointint}.

\end{proof}

Now we will show that if $\Psi_1$, $\Psi_2$ and $\Psi_3$ are chosen appropriately, then there exists an orthonormal basis $\{e_1,\ldots,e_d\}$ such that for all $i$, there exists a $j$ such that $\braket{e_i}{\psi_j}$=0. \ In particular such a basis exists if $\Psi_1$ and $\Psi_2$ are ``nearly orthogonal" and $\Psi_3$ is not coplanar with $\Psi_1$ and $\Psi_2$.

\begin{lem}
\label{lem:ortho-basis-psii}

Let $u_1$ and $u_2$ be orthonormal vectors, and let $\Psi_3$ be a state that is not coplanar with $u_1$ and $u_2$. \ In particular assume that $|\braket{\Psi_3}{u_1}|^2+|\braket{\Psi_3}{u_2}|^2\leq k$ for some fixed $k<1$. \ Then there exists a $k'>0$  (depending on $k$) such that if $\Psi_1=u_1$, and $\Psi_2$ is in the $u_1,u_2$ plane such that $0<|\braket{\Psi_1}{\Psi_2}|<k'$, then there exists an orthonormal basis $\{e_1, \ldots, e_d\}$ such that for all $i=1\ldots d$, there exists $j \in \{1,2,3\}$ such that $\braket{e_i}{\psi_j}=0$.

\end{lem}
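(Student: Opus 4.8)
The plan is to reduce the statement to a problem inside a fixed $3$-dimensional subspace, and then to solve that problem as a small perturbation of the degenerate case $\braket{\Psi_1}{\Psi_2}=0$. First I would let $\ket{q}$ be the component of $\ket{\Psi_3}$ orthogonal to $\mathrm{span}(u_1,u_2)$. \ Since $\ket{u_1},\ket{u_2}$ are orthonormal, $\||q\rangle\|^{2}=1-\big(|\braket{\Psi_3}{u_1}|^{2}+|\braket{\Psi_3}{u_2}|^{2}\big)\ge 1-k>0$, so the normalized vector $\ket{\hat q}$ is well defined, and $V:=\mathrm{span}(u_1,u_2,\hat q)$ is a $3$-dimensional space containing $\ket{\Psi_1}=\ket{u_1}$, $\ket{\Psi_2}$ (which lies in the $u_1,u_2$ plane), and $\ket{\Psi_3}$. \ Any orthonormal basis $e_4,\dots,e_d$ of $V^{\perp}$ consists of vectors orthogonal to $u_1,u_2$ (hence to $\Psi_1$ and $\Psi_2$) and orthogonal to $\hat q$ and to the $u_1,u_2$-part of $\Psi_3$ (hence to $\Psi_3$), so any choice of $j$ works for those $i\ge 4$. \ It therefore suffices to produce an orthonormal basis $\{e_1,e_2,e_3\}$ of $V$ with $\braket{e_1}{\Psi_1}=\braket{e_2}{\Psi_2}=\braket{e_3}{\Psi_3}=0$; because this reduced problem is purely $3$-dimensional, the $k'$ I obtain will depend only on $k$, not on $d$.

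\emph{Reducing to a single vector $e_3$.} Choosing the phase of $\ket{\hat q}$ so that the $\hat q$-coefficient of $\ket{\Psi_3}$ is a nonnegative real $q_3\ge\sqrt{1-k}$, in the orthonormal basis $(u_1,u_2,\hat q)$ of $V$ I would write $\Psi_1=(1,0,0)$, $\Psi_2=(\alpha,\beta,0)$ with $\alpha=\braket{\Psi_1}{\Psi_2}$, $0<|\alpha|<k'$, $|\beta|^{2}=1-|\alpha|^{2}$, and $\Psi_3=(p_1,p_2,q_3)$ with $|p_1|^{2}+|p_2|^{2}=1-q_3^{2}\le k$. \ I claim it is enough to find a unit vector $\ket{e_3}\in V$ with (i) $\braket{e_3}{\Psi_3}=0$, (ii) $\braket{\Psi_1}{e_3}\braket{e_3}{\Psi_2}=\braket{\Psi_1}{\Psi_2}$, and (iii) $\ket{e_3}$ not proportional to $\ket{\Psi_2}$. \ Indeed, given such an $e_3$, set $W=\{v\in V:\braket{v}{e_3}=0\}$, take $\ket{e_1}$ to be the normalization of the projection of $\ket{\Psi_2}$ onto $W$ (nonzero by (iii)), and take $e_2$ to be any unit vector in $W$ orthogonal to $e_1$. \ Then $\{e_1,e_2,e_3\}$ is an orthonormal basis of $V$; using that the projection onto $W$ acts on $V$ as the identity minus the rank-one projection onto $\ket{e_3}$, one checks that $\braket{e_2}{\Psi_2}=0$ (since $e_2\perp e_1\propto P_W\ket{\Psi_2}$ and $e_2\perp e_3$) and that $\braket{e_1}{\Psi_1}$ is proportional to $\braket{\Psi_1}{\Psi_2}-\braket{\Psi_1}{e_3}\braket{e_3}{\Psi_2}$, which is $0$ by (ii); and $\braket{e_3}{\Psi_3}=0$ is (i).

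\emph{Finding $e_3$ by perturbation.} When $\alpha=0$ the explicit vector $\ket{e_3^{(0)}}=(0,\,q_3,\,-\overline{p_2})/\sqrt{q_3^{2}+|p_2|^{2}}$ satisfies (i), (ii) (both sides vanish, since its first coordinate is $0$), and (iii) (its first coordinate is $0\ne\alpha$). \ For $\alpha\ne 0$ small I would perturb $e_3^{(0)}$. \ Let $\Sigma=\{e\in V:\|e\|=1,\ \braket{e}{\Psi_3}=0\}$, a $3$-real-dimensional sphere, and define $F:\Sigma\to\mathbb{C}$ by $F(e)=\braket{\Psi_1}{e}\braket{e}{\Psi_2}=\alpha|e_1|^{2}+\beta\,e_1\overline{e_2}$, so that (ii) asks exactly that $F(e)=\alpha$. \ A short computation gives $F(e_3^{(0)})=0$ and $dF_{e_3^{(0)}}(w)=\beta\,\overline{(e_3^{(0)})_2}\,w_1$ for $w\in T_{e_3^{(0)}}\Sigma$; since $|(e_3^{(0)})_2|=q_3/\sqrt{q_3^{2}+|p_2|^{2}}\ge\sqrt{1-k}$ and $w_1$ ranges over all of $\mathbb{C}$ as $w$ ranges over $T_{e_3^{(0)}}\Sigma$, the map $dF_{e_3^{(0)}}$ is onto $\mathbb{C}$, with a right inverse whose norm is bounded above by a constant depending only on $k$ (using also $|\beta|\ge\sqrt3/2$ for $|\alpha|\le\tfrac12$). \ As $F$ also has derivatives bounded uniformly over the compact parameter range $\{|p_1|^{2}+|p_2|^{2}\le k,\ |\alpha|\le\tfrac12\}$, a quantitative implicit function theorem produces a $k'>0$ depending only on $k$ such that for every $|\alpha|<k'$ there is $e_3\in\Sigma$ with $F(e_3)=\alpha$ lying arbitrarily close to $e_3^{(0)}$; and since $F$ is a submersion near $e_3^{(0)}$, its level set $\{F=\alpha\}$ is a $1$-dimensional curve there, so I can choose $e_3$ on it not proportional to $\Psi_2$ (in fact, if $\Psi_2\not\perp\Psi_3$ then no point of $\Sigma$ is proportional to $\Psi_2$). \ This $e_3$ satisfies (i)--(iii), which by the previous step finishes the proof.

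\emph{Where the work is.} Everything but the perturbation step is routine linear algebra and geometry; the real content is extracting a value of $k'$ that depends on $k$ alone and not on the individual configuration $(\Psi_2,\Psi_3)$. \ This is what forces the use of the \emph{quantitative} implicit function theorem, and it is fed by two uniform inputs: a lower bound on $|(e_3^{(0)})_2|$ (this is exactly where the hypothesis $k<1$ enters, through $\sqrt{1-k}$), and uniform bounds on the derivatives of $F$ over a compact parameter set. \ A secondary nuisance is checking that the perturbed $e_3$ can be taken not proportional to $\Psi_2$, which follows from the dimension count on the level curve of $F$.
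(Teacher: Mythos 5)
Your proof is correct and shares the paper's first move---everything happens inside the $3$-dimensional space spanned by $u_1,u_2$ and (the normalized orthogonal component of) $\Psi_3$, with $e_4,\dots,e_d$ chosen arbitrarily in the orthogonal complement---but the core step is genuinely different. The paper writes down an explicit one-complex-parameter family $e_1(x),e_2(x),e_3(x)$ that is orthonormal by construction with $e_1\perp\Psi_1$ and $e_2\perp\Psi_2$ automatic, and then solves $\braket{e_3}{\Psi_3}=0$ by hand: after a change of variables it becomes a real quadratic whose discriminant is positive once $|\braket{\Psi_1}{\Psi_2}|$ is small relative to the bound $f(k)$ on the coefficients of $\Psi_3$. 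You instead reduce the whole problem to producing a single unit vector $e_3\perp\Psi_3$ satisfying the compensation identity $\braket{\Psi_1}{e_3}\braket{e_3}{\Psi_2}=\braket{\Psi_1}{\Psi_2}$ (which is exactly what makes $e_1\propto P_{e_3^{\perp}}\Psi_2$ orthogonal to $\Psi_1$), and you obtain $e_3$ by a quantitative implicit function theorem around the explicit solution at $\alpha=0$. Your surjectivity and uniformity checks are sound: $dF$ at $e_3^{(0)}$ is $w\mapsto\beta\,\overline{(e_3^{(0)})_2}\,w_1$, the lower bound $|(e_3^{(0)})_2|\ge\sqrt{1-k}$ is where $k<1$ enters, and $w_1$ does sweep out all of $\mathbb{C}$ because $\Psi_3^{\perp}\cap V$ is complex $2$-dimensional and contains vectors with nonzero first coordinate (this uses $q_3\neq0$, i.e.\ non-coplanarity). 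The trade-off: the paper's argument is more elementary (a quadratic formula and a discriminant), while yours avoids guessing the ansatz at the cost of invoking a quantitative IFT with parameter-uniform bounds.

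One loose end: your dimension count for condition (iii) is not by itself conclusive, because when $\Psi_2\perp\Psi_3$ the entire circle $\{c\Psi_2:|c|=1\}$ lies in the level set $\{F=\alpha\}$ (indeed $F(c\Psi_2)=\alpha$), so a $1$-dimensional level curve could in principle coincide locally with that circle. But this does not damage the proof: if $\Psi_2\perp\Psi_3$ the lemma is immediate anyway --- take $e_3\propto\Psi_2$, which is then orthogonal to $\Psi_3$ (so $j=3$ works for $i=3$), and every vector in $e_3^{\perp}$ is orthogonal to $\Psi_2$ (so $j=2$ works for $i=1,2$); and if $\Psi_2\not\perp\Psi_3$ your parenthetical already shows no point of $\Sigma$ is proportional to $\Psi_2$. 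With that case split made explicit, the argument is complete.
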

\begin{proof}

Set $\Psi_1=u_1$. \ Without loss of generality we can set
\[
\Psi_2 = \frac{a u_1 + u_2}{\sqrt{|a|^2 + 1}}
\] for some complex parameter $a$ which we have yet to specify. \ By the Gram-Schmidt process, there exists a vector $u_3$, and complex coefficients $b,c$, such that
\[
\Psi_3 = \frac{b u_1 + c u_2 + u_3}{\sqrt{|b|^2 + |c|^2 + 1}}
\]
\noindent where $\{u_1, u_2, u_3\}$ is an orthonormal basis for the subspace spanned by $\Psi_1$, $\Psi_2$ and $\Psi_3$. \ Note that the statement $|\braket{\Psi_3}{u_1}|^2+|\braket{\Psi_3}{u_2}|^2\leq k$ implies that $b, c \leq f(k)$ for some function $f$ of $k$.

Now consider the following three (non-normalized) vectors, parameterized by $x\in \mathbb{C}$:
\begin{align*}
e_1 &=      x u_2 +           u_3 \\
e_2 &= u_1 -  a^* u_2 + a^* x^* u_3 \\
e_3 &= a (1 + x^* x) u_1 +      u_2 -       x^* u_3
\end{align*}

By construction the $e_i$'s are orthogonal to one another, $e_1$ is orthogonal to $\Psi_1$ and $e_2$ is orthogonal to $\Psi_2$. \ We would like to have
\[\braket{e_3}{\Psi_3}=ba^* (1 + |x|^2) + c - x = 0\]
as well. \ If either $a=0$ or $b=0$, we can achieve this by simply setting $x = c$, so assume $a$ and $b$ are nonzero. \ Also, let $w = \frac{ba^*}{|ba^*|}$, which has norm $1$. \ Then setting $\braket{e_3}{\Psi_3}=0$ is equivalent to setting
\[
|ba^*| (1 + |x|^2) + cw - xw = 0
\]

\noindent Setting $xw = p + iq$ for real $p$,$q$ gives
\begin{align*}
|ba^*| (1 + p^2 + q^2) + \mathrm{Re}(cw) - p &= 0\\
\mathrm{Im}(cw) - q &= 0.
\end{align*}

\noindent Plugging the value of $q$ from the second equation into the first gives a quadratic in $p$,
\[
 |ba^*| p^2 - p + |ba^*| \left(1 + \mathrm{Im}(cw)^2\right) + \mathrm{Re}(cw) = 0.
\]
This has a real solution in $p$ if and only if
\[
1 - 4 |ba^*| \left(|ba^*|\left(1+ \mathrm{Im}(cw)^2\right) + \mathrm{Re}(cw)\right) \geq 0.
\]
Note that $b$,$c$ are bounded above by $f(k)$ as noted above. \ By making $\Psi_1$ nearly orthogonal to $\Psi_2$, we can choose $a$ arbitrarily close to zero. \ This makes the left-hand side of the inequality arbitrarily close to $1$, so the inequality will hold. \ Hence, we can solve for $x$ and find an $e_3$ that is orthogonal to $\Psi_3$. \ Therefore if $|\braket{\Psi_1}{\Psi_2}|<k'$ for some $k'$ which depends on $k$, then the desired orthonormal basis exists.

This gives a basis $e_1$, $e_2$, $e_3$ for the subspace spanned by $\Psi_1$, $\Psi_2$ and $\Psi_3$ such that $\braket{e_i}{\Psi_i}=0$ for each $i=1,2,3$. \ If the dimension of the space $d$ is more than $3$, any extension of this basis to $e_1 \ldots e_d$ has the property that for all $i=1\ldots d$, there exists a $j \in \{1,2,3\}$ such that $\braket{e_i}{\psi_j}=0$.

\end{proof}

Our no-go theorem follows from the above observations.

\begin{thm}
\label{main_thm_unitary}
There are no strongly symmetric, maximally nontrivial $\psi$-epistemic theories with $\Lambda= U(d)$ in dimension $d\geq 3$.
\end{thm}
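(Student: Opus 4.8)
The plan is to run the same ``some outcome must be returned, yet no outcome can be returned'' contradiction as in Theorem~\ref{main_thm}, transplanted to $\Lambda = U(d)$ via the measure $\nu$ on $\mathbb{CP}^{d-1}$ and the set $T$ supplied by Lemmas~\ref{lem:existence-nu}--\ref{lem:joint-overlap-properties}. Assuming such a theory exists, I would first pass to the absolutely continuous part of $\nu$ (Lemma~\ref{lem:nu-C-nontrivial}), so that $T \subseteq \mathbb{CP}^{d-1}$ has positive Lebesgue measure and enjoys properties~\ref{Tposmeas}--\ref{Tjointint}. The target is a triple of quantum states $\Psi_1, \Psi_2, \Psi_3$ together with a \emph{single} unitary $M$ such that $M^\dagger \Psi_i \in T$ for $i=1,2,3$ --- so that by property~\ref{Tjointint}, $B := \supp(\mu_{\Psi_1}) \cap \supp(\mu_{\Psi_2}) \cap \supp(\mu_{\Psi_3})$ has positive measure --- and such that the $\Psi_i$ satisfy the hypotheses of Lemma~\ref{lem:ortho-basis-psii} ($\Psi_1,\Psi_2$ nearly orthogonal, $\Psi_3$ off their plane). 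Given such a triple, Lemma~\ref{lem:ortho-basis-psii} produces an orthonormal basis in which every outcome is orthogonal to one of the $\Psi_j$, and then Lemma~\ref{orthogonal} together with the normalization $\sum_i \xi_{i,M} \equiv 1$ closes the argument.

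The real work is choosing the three states, and the order matters: since $T$ might be concentrated in a tiny region where every pair of points is nearly parallel, I cannot simply grab three points of $T$, and must instead use maximum nontriviality to \emph{force} $T$ to contain a nearly orthogonal pair. Concretely, I would first fix the constant $k<1$ of Lemma~\ref{lem:ortho-basis-psii} small enough that the Lebesgue measure of $\{\lambda : |\braket{\lambda}{w_1}|^2 + |\braket{\lambda}{w_2}|^2 > k\}$ --- which, by unitary invariance of the Fubini-Study volume, is the same for every orthonormal pair $w_1,w_2$, and which decreases to the measure of a $\mathbb{CP}^1$ (hence to $0$, since $d\geq 3$) as $k\to 1^-$ --- is strictly less than $\mathrm{meas}(T)$. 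Lemma~\ref{lem:ortho-basis-psii} then yields a constant $k'>0$, which we may shrink so that $k'<1$. Now pick any quantum states $\psi,\phi$ with $|\braket{\psi}{\phi}| = k'/2$; they are non-orthogonal, so by maximum nontriviality $\mu_\psi$ and $\mu_\phi$ have nontrivial overlap, and property~\ref{Tintersection} gives a unitary $M$ with $\lambda_1 := M^\dagger \psi \in T$, $\lambda_2 := M^\dagger \phi \in T$, and $|\braket{\lambda_1}{\lambda_2}| = k'/2 \in (0,k')$. Let $V = \mathrm{span}(\lambda_1,\lambda_2)$ with orthonormal basis $u_1,u_2$. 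By the choice of $k$, the set $T \setminus \{\lambda : |\braket{\lambda}{u_1}|^2 + |\braket{\lambda}{u_2}|^2 > k\}$ still has positive measure, so I can pick $\lambda_3$ in it; then the squared norm of its projection onto $V$ is at most $k<1$, so $\lambda_3$ is not coplanar with $\lambda_1,\lambda_2$. Finally set $\Psi_i := M\lambda_i$, so that $M^\dagger \Psi_i = \lambda_i \in T$ and every pairwise inner product among the $\Psi_i$ equals the corresponding one among the $\lambda_i$; hence, writing $\Psi_1 = u_1$ and taking $u_2$ as above, the $\Psi_i$ meet the hypotheses of Lemma~\ref{lem:ortho-basis-psii}.

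To finish, Lemma~\ref{lem:ortho-basis-psii} (applied to $\lambda_1,\lambda_2,\lambda_3$) yields an orthonormal basis $\{f_1,\ldots,f_d\}$ such that for each $i$ there is some $j=j(i)\in\{1,2,3\}$ with $\braket{f_i}{\lambda_j}=0$; set $e_i := Mf_i$, so $\{e_1,\ldots,e_d\}$ is orthonormal and $\braket{e_i}{\Psi_{j(i)}} = \braket{f_i}{\lambda_{j(i)}} = 0$. Let $M_e = \{e_1,\ldots,e_d\}$. By property~\ref{Tjointint}, $B$ has positive measure. For each $i$, since $e_i \perp \Psi_{j(i)}$, Lemma~\ref{orthogonal} --- whose proof is purely the Born rule and so applies verbatim when $\Lambda = U(d)$ --- gives that $\supp(\mu_{\Psi_{j(i)}}) \cap \nz(\xi_{e_i,M_e})$ has measure zero, hence $\xi_{e_i,M_e} = 0$ almost everywhere on $B \subseteq \supp(\mu_{\Psi_{j(i)}})$. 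Summing over $i$ and using $\sum_i \xi_{e_i,M_e} \equiv 1$ yields $0 = \sum_i \int_B \xi_{e_i,M_e}(\lambda)\,d\lambda = \int_B 1\,d\lambda = \mathrm{meas}(B) > 0$, a contradiction. I expect the main obstacle to be exactly the interleaving in the middle paragraph: arranging a near-orthogonal pair and an off-plane third point to live \emph{simultaneously} inside a set $T$ about whose shape we know nothing beyond positive measure --- which is where maximum nontriviality (through property~\ref{Tintersection}) and the $k\to 1^-$ measure-continuity estimate both get used; the rest is a faithful repackaging of the $\mathbb{CP}^{d-1}$ argument. $\qed$
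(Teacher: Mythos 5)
Your proposal is correct, and its overall skeleton matches the paper's: produce three states $\Psi_1,\Psi_2,\Psi_3$ that (A) have nontrivial joint overlap via Lemma~\ref{lem:joint-overlap-properties} and (B) admit an orthonormal basis each of whose elements is orthogonal to one of them via Lemma~\ref{lem:ortho-basis-psii}, then contradict the Born rule together with $\sum_i\xi_{i,M}\equiv 1$. Where you genuinely diverge is in how you break the circular dependence between the near-orthogonality threshold $k'$ (which depends on the constant $k$ bounding $\Psi_3$'s projection onto the $\Psi_1$--$\Psi_2$ plane) and the choice of $\Psi_3$ (which a priori depends on $\Psi_1,\Psi_2$ through the unitary $M$ supplied by property~\ref{Tintersection}). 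The paper resolves this with a compactness argument: it takes sequences $\psi_n,\phi_n$ converging to an orthogonal pair, transports them into $T$ by unitaries $M_n$, extracts convergent subsequences, fixes a single $\tilde\chi\in T$ non-coplanar with the limits, and shows all hypotheses hold for $n$ large. You instead fix $k$ \emph{up front} by a volume bound: since the Fubini--Study measure of $\left\{\lambda : |\braket{\lambda}{w_1}|^2+|\braket{\lambda}{w_2}|^2>k\right\}$ is independent of the orthonormal pair $(w_1,w_2)$ and shrinks to the measure of a $\mathbb{CP}^1$ (zero, as $d\geq 3$) as $k\to 1^-$, you can make it smaller than $\mathrm{meas}(T)$, guaranteeing that $T$ contains a suitable $\lambda_3$ far from \emph{whatever} plane $\mathrm{span}(\lambda_1,\lambda_2)$ turns out to be. This buys you a fully explicit, non-asymptotic construction (states at inner product $k'/2$ rather than ``sufficiently large $n$'') and avoids the subsequence bookkeeping entirely; the price is the extra (easy) ingredient of unitary invariance of the Fubini--Study volume, which the paper's route does not need. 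One small caveat, shared equally with the paper's own write-up: the final step from ``$\xi_{e_i,M_e}=0$ $\mu_{\Psi_{j(i)}}$-a.e.'' to ``$\int_B\xi_{e_i,M_e}\,d\lambda=0$'' implicitly uses that each $\mu_{\Psi_{j(i)}}$ dominates Lebesgue measure on $B$ (so that $\mu$-null subsets of $B$ are Lebesgue-null), which is exactly what the Lebesgue-density construction of $T$ is there to provide; you inherit this at the same level of rigor as the original.
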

\begin{proof}

Suppose a strongly symmetric, maximally nontrivial theory exists with $\Lambda= U(d)$ in dimension $d\geq 3$. \ We will find three states $\Psi_1$, $\Psi_2$, and $\Psi_3$ with the following two properties:

\begin{enumerate}
\item[(A)] $\mu_{\Psi_1}$, $\mu_{\Psi_2}$ and $\mu_{\Psi_3}$ have nontrivial joint overlap, and
\item[(B)] there exists an orthonormal basis $e_1, \ldots, e_d$ such that for all $i=1,\ldots, d$, there exists a $j \in \{1,2,3\}$ such that $\braket{e_i}{\psi_j}=0$.
\end{enumerate}

The contradiction will follow as in the proof of Theorem \ref{main_thm}. \ Consider making a measurement in the basis $e_1, \ldots, e_d$. \ Let $S=\supp(\mu_{\Psi_1}) \cap \supp(\mu_{\Psi_2}) \cap \supp(\mu_{\Psi_3})$, and consider ontic states $M\in S$. \ Note that $M\in \supp(\mu_{\Psi_j})$, and that for each $j=1,2,3$ and each $e_i$, there exists a $j \in \{1,2,3\}$ such that $\braket{e_i}{\Psi_j}=0$. \ So by the Born rule, for each $i$ at most a set of measure zero of $M\in S$ can return answer $e_i$ with positive probability. \ Since each $M\in S$ gives some outcome $e_i$ under measurement, $S$ must have measure zero. \ But $S$ has positive measure by property (A), which is a contradiction.

Using Lemma \ref{lem:joint-overlap-properties}, one can show that for all nonorthogonal states $\Psi_1$ and $\Psi_2$, there exists a state $\Psi_3$ not coplanar with $\Psi_1$ and $\Psi_2$ such that the states have property (A).\footnote{Indeed, by Lemma \ref{lem:joint-overlap-properties} there exists a set $T$ with the four properties defined in the lemma. Given $\Psi_1$ and $\Psi_2$ which are non-orthogonal, by property \ref{Tintersection} of $T$ there exists an $M$ such that $M^\dagger \Psi_1 \in T$ and $M^\dagger \Psi_2 \in T$. \ Since $T$ has positive measure by property \ref{Tposmeas}, there exists a $\Psi_3$ not coplanar with $\Psi_1$ and $\Psi_2$ such that $M^\dagger \Psi_3 \in T$ as well. \ These three states have nontrivial joint overlap by property \ref{Tjointint} of $T$.} \ By Lemma \ref{lem:ortho-basis-psii}, if $\Psi_1$ and $\Psi_2$ are ``nearly orthogonal," and $\Psi_3$ is not coplanar with $\Psi_1$ and $\Psi_2$, then the states have property (B). \ These two facts nearly suffice to guarantee the existence of three states with properties (A) and (B), but they fall short. \ The reason is that the degree to which $\Psi_1$ and $\Psi_2$ must be ``nearly orthogonal" depends on the choice of $\Psi_3$. \ Although for every $\Psi_1$ and $\Psi_2$ there \emph{exists} a non-coplanar $\Psi_3$ which has property (A), the choice of $\Psi_3$ could depend arbitrarily on $\Psi_1$ and $\Psi_2$, so in particular could be such that property (B) is not satisfied.

To fix this, we will consider an infinite sequence of quantum states $\psi_n$ and $\phi_n$ of decreasing inner product. \ For each $n$, there exists a $\chi_n$ that shares property (A) with $\psi_n$ and $\phi_n$. \ Since $\mathbb{CP}^{d-1}$ is compact, even though the $\chi_n$'s might ``wander" in $\mathbb{CP}^{d-1}$, by passing to a subsequence we can assume they converge to fixed state $\chi$, which we will show can be chosen to be non-coplanar with $\psi_n$ and $\phi_n$ for large $n$. \ This allows us to ``nail down" the choice of $\chi_n$ so that we can apply Lemma \ref{lem:ortho-basis-psii} to $\Psi_1=\psi_n$, $\Psi_2=\phi_n$, and $\Psi_3=\chi_n$.

More precisely, by Lemma \ref{lem:joint-overlap-properties}, there exists a set $T$ with the following properties:
\begin{enumerate}
\item $T$ has positive measure.
\item If $M^\dagger \psi \in T$ then $M$ is a Lebesgue density point of $\supp(\mu_\psi)$.
\item If $\mu_\psi$ and $\mu_\phi$ have nontrivial overlap, then there exists an $M$ such that $M^\dagger \psi \in T$ and $M^\dagger \phi \in T$.
\item For any three states $\Psi_1,\Psi_2,\Psi_3$ such that there exists an $M$ with $M^\dagger \Psi_i \in T$ for each $i=1,2,3$, the intersection $\supp(\mu_{\Psi_1}) \cap \supp(\mu_{\Psi_2}) \cap \supp(\mu_{\Psi_3})$ has positive measure.
\end{enumerate}
Consider a sequence of quantum states $\psi_n$, $\phi_n$ such that $\psi_n \rightarrow \psi$ and $\phi_n \rightarrow \phi$ for orthogonal states $\psi$ and $\phi$ as $n\rightarrow \infty$, but $|\braket{\psi_n}{\phi_n}|>0$ for all $n\in\mathbb{N}$. \ For each $n$, by maximum nontriviality $\mu_{\psi_n}$ and $\mu_{\phi_n}$ have nontrivial overlap, and there exists an $M_n$ such that $M_n^\dagger \psi_n \in T$ and $M_n\dagger \phi_n \in T$ by property \ref{Tintersection} of $T$.

Let $\tilde{\psi}_n = M_n^\dagger \psi_n$, and let $\tilde{\phi}_n =M_n^\dagger \phi_n$. \ By construction $\tilde{\psi}_n, \tilde{\phi}_n \in T$. \ These form sequences in $\mathbb{CP}^{d-1}$. \ Since $\mathbb{CP}^{d-1}$ is compact, there exists a subsequence of the $\tilde{\psi}_n$'s which approaches some $\tilde{\psi}$ as $n\rightarrow \infty$. \ Therefore by passing to a subsequence, there exists $\tilde{\psi}$ and $\tilde{\phi}$ such that $\tilde{\psi}_n \rightarrow \tilde{\psi}$ and $\tilde{\phi}_n \rightarrow \tilde{\phi}$ as $n\rightarrow \infty$.

Since $T$ has positive measure by property \ref{Tposmeas}, there exists a $\tilde{\chi} \in T$ that is not coplanar with $\tilde{\psi}$ and $\tilde{\phi}$. \ Fix such a $\tilde{\chi}$, and let $\chi_n = M_n \tilde{\chi}$. \ Passing to a subsequence again, we have that $\chi_n \rightarrow \chi$ for some state $\chi$. \ Note $\chi$ is not coplanar with $\psi$ and $\phi$, since the $M_n^\dagger$'s preserve inner products. \ Also note that for each $n$, we have $M_n^\dagger \psi_n, M_n^\dagger \phi_n, M_n^\dagger \chi_n \in T$. \ So by property \ref{Tjointint} of $T$, the measures $\mu_{\psi_n}$, $\mu_{\phi_n}$ and $\mu_{\chi_n}$ have property (A) (nontrivial joint overlap) for each $n$.

Now as $n\rightarrow \infty$, we have that $|\braket{\psi_n}{\phi_n}| \rightarrow 0$, and yet at the same time $\chi_n \rightarrow \chi$ for some fixed state $\chi$ which is not coplanar with $\psi$ and $\phi$. \ Hence for sufficiently large $n$, if $e_{1_n}$ and $e_{2_n}$ span the $\psi_n,\phi_n$ plane, then we will have $|\braket{e_{1_n}}{\chi_n}|^2+|\braket{e_{2_n}}{\chi_n}|^2\leq k$ for some $k<1$. \ By Lemma \ref{lem:ortho-basis-psii}, there exists a $k'$ such that if $|\braket{\psi_n}{\phi_n}|<k'$, then property (B) holds for $\psi_n$, $\phi_n$ and $\chi_n$. \ For sufficiently large $n$, we have $|\braket{\psi_n}{\phi_n}|<k'$, since $|\braket{\psi_n}{\phi_n}|\rightarrow 0$, and hence property (B) holds for these states.

Putting this together, we can find a value of $n$ such that the three states $\Psi_1=\psi_n$, $\Psi_2=\phi_n$, and $\Psi_3=\chi_n$ have both properties (A) and (B). \ The contradiction follows as noted above.

\end{proof}

\section{Conclusions and Open Problems}
\label{sec-conclusion}
In this paper, we gave a construction of a maximally nontrivial theory
in arbitrary finite dimensions. \ However, the theory we constructed is not
symmetric and is rather unnatural. \ We then proved that symmetric, maximally
nontrivial $\psi$-epistemic theories do not exist in dimensions $d\geq 3$ (in contrast to the $d=2$ case,
where the Kochen-Specker theory provides an example). \ Our impossibility
proof made heavy use of the symmetry assumption. \ As for the assumption $d\geq 3$, we
used that in two places: firstly and most importantly, to get a nonempty deficiency region (in Corollary \ref{deficient}), and secondly, to prove that $r_{\psi}=\frac{1}{2}$ in Lemma \ref{rhalf}.

It might be possible to relax our symmetry assumption and obtain no-go theorems for different ontic spaces,
since deficiency holds for any $\psi$-epistemic theory in $d\geq 3$ even
without a symmetry assumption. \ As shown above, we can generalize our no-go theorem to rule out strongly symmetric maximally nontrivial theories with $\Lambda=U(d)$. \ It would be particularly interesting to know whether (merely) symmetric theories exist in $\Lambda=U(d)$, or in any ontic spaces $\Lambda$ that are larger than $\mathbb{CP}^{d-1}$, but that are still acted on by the $d$-dimensional unitary group.

Also, in our proof, we did not use the
specific form of the Born rule, only the fact that projection of $\ket{\psi}$ onto $\ket{\phi}$ must occur with
probability $0$ if $\braket{\psi}{\phi}=0$. \ Additional properties of the Born rule might place further constraints on
$\psi$-epistemic theories.

Interestingly, trying to generalize the proof of Theorem \ref{strong-symmetry-thm} directly to obtain a proof of Theorem \ref{main_thm} gives rise to a  variant of the Kakeya/Besicovitch problem. \ Recall that to prove Theorem \ref{strong-symmetry-thm}, we showed that ontic states in a set $B$ in the neighborhood of $\alpha$ returned value $j$ under measurement, and yet the average measure of states orthogonal to $j$ had nontrivial support on $B$. \ Now if the measures $\mu_\psi = f_\psi (|\braket{\psi}{\lambda}|^2)$ vary with $\psi$, it remains open whether or not the measures of states orthogonal to $j$ must have support on $B$, or if instead it is possible for them to ``evade" $B$ to avoid contradicting the Born rule.

Placing this problem in the plane rather than in $\mathbb{CP}^{d-1}$, we obtain a clean Kakeya-like problem as follows. \ Let $S$ be a subset of $\mathbb{R}^2$ with the following property. \ For all $x \in \mathbb{R}^2$ and $\varepsilon > 0$,  $S$ contains a set of circles, centered at $x$, that has positive Lebesgue measure within the annulus $ \{ y : | y-x | \in [1-\varepsilon,1] \}$. \ Can the complement of $S$ have positive Lebesgue measure? \ This question has been discussed on MathOverflow \cite{mathoverflow2} but remains open.

Here are some additional open problems.

\begin{itemize}
\item An obvious problem is
whether symmetric and nontrivial (but not necessarily \emph{maximally} nontrivial) theories exist in dimensions $d\geq 3$.

\item How does the size of the deficiency region scale as the dimension $d$ increases?

\item In the maximally nontrivial theory we constructed, the overlap between any two non-orthogonal
states $\ket{\psi},\ket{\phi}$ is vanishingly small: like $(\varepsilon / d)^{O(d)}$ as a function of the dimension $d$ and inner product $\varepsilon = \left| \braket{\psi}{\phi} \right|$. \ Is it possible to construct a theory with substantially higher overlaps -- say, $(\varepsilon / d)^{O(1)}$? \ (Note that if $d\geq 3$, then the result of Leifer and Maroney \cite{marleif} says that the overlap cannot achieve its ``maximum'' value of $\varepsilon^2$.)

\item Can we construct $\psi$-epistemic theories with the property that an ontic state $\lambda$, in the support of an ontic distribution $\mu_{\psi}$, can \emph{never} be used to recover the quantum state $\psi$ uniquely? \ (This question was previously asked by Leifer and Maroney \cite{marleif}, as well as by A.\ Montina on MathOverflow \cite{mathoverflow}.)

\item What can be said about the case of infinite-dimensional Hilbert spaces?

\end{itemize}

\section{Acknowledgments}

We thank Andrew Drucker for discussions about the measure-theoretic aspects of this paper, and Matt Leifer and Terry Rudolph for discussions about $\psi$-epistemic theories in general. \ We also thank Terence Tao, Sean Eberhard, Ramiro de la Vega, Scott Carnahan, and Alberto Montina for discussions on MathOverflow.

\end{document}